\numberwithin{equation}{section}
\theoremstyle{plain}
\newtheorem{theorem}[equation]{Theorem}
\newtheorem{lemma}[equation]{Lemma}
\newtheorem{proposition}[equation]{Proposition}
\theoremstyle{definition}
\newtheorem{definition}[equation]{Definition}
\newtheorem{example}[equation]{Example}
\newtheorem{remark}[equation]{Remark}
\newcommand{\cat}[1]{\ensuremath{\mathbf{#1}}}
\newcommand{\op}{\ensuremath{{}^{\text{op}}}}
\newcommand{\id}[1][]{\ensuremath{\mathrm{id}_{#1}}} 
\DeclareMathOperator{\supp}{supp}
\newcommand{\ie}{\text{i.e.}\ }
\newcommand{\eg}{\text{e.g.}\ }
\title{Categories of Empirical Models}
\author{Martti Karvonen
\institute{School of Informatics\\ University of Edinburgh}
\email{martti.karvonen@ed.ac.uk}
}
\begin{document}

\maketitle

\begin{abstract}
A notion of morphism that is suitable for the sheaf-theoretic approach to contextuality is developed, resulting in a resource theory for contextuality. The key features involve using an underlying relation rather than a function between measurement scenarios, and allowing for stochastic mappings of outcomes to outcomes. This formalizes an intuitive idea of using one empirical model to simulate another one with the help of pre-shared classical randomness. This allows one to reinterpret concepts and earlier results in terms of morphisms. Most notably: non-contextual models are precisely those allowing a morphism from the terminal object; contextual fraction is functorial; Graham-reductions induce morphisms, reinterpreting Vorob'evs theorem; contextual models cannot be cloned.
\end{abstract}

\section{Introduction}

In the Abramsky-Brandenburger approach to contextuality~\cite{abramsky_brandenburger_sheaf-theoretic_structure} questions of non-locality and contextuality are formulated in the language of sheaf theory. The objects defined and studied in that approach are called empirical models. In this work, we present two notions of a morphism between empirical models, allowing us to further advance the use of categorical language in the study of contextuality. Our notion of a morphism can be seen as a formalization of the intuitive idea that we can use one empirical model to simulate another one -- either deterministically or more generally using classical, pre-shared randomness to postprocess the measurement results. The key techniques are (i) letting the results of a measurement depend on a \emph{set} of other measurements, (ii) incorporating stochastic postprocessing, using the Kleisli category of the distribution monad. These let us reinterpret various concepts and results from the literature in terms of the resulting category. For example, in Theorem~\ref{thm:globalsectionsasmorphisms} we characterize non-contextual models as those admitting a morphism from the terminal object.

If one thinks of empirical models as concrete experimental set-ups, one can give a concrete interpretation for the morphisms. A map $d\to e$ consists of a method to simulate the data that $e$ would generate using only $d$. We will illustrate this idea with the following story: imagine a lab to perform $e\colon \langle X,\mathcal{M},(O_x)_{x\in X}\rangle$ where for each $x\in X$ there is a graduate student responsible for measuring $x$ and a professor choosing which $C\in\mathcal{M}$ to execute on a given day. One night the students have a party at the lab, and all the equipment breaks down. Fearing the consequences, the students decide that experiments must continue without the professor noticing any abnormalities in the data generated. Luckily, the students happen to have friends at a lab containing an experimental set-up for $d$, and the students make a plan: when a measurement $x$ is called for, the student responsible for it calls each student responsible for a measurement in some set $\pi(x)$ of measurements at the lab for $d$, and then outputs a result $o\in O$ based on the outcome. If the students manage to fool the professor (who knows what the data should look like) no matter which run $C_1,\dots, C_n\in \mathcal{M}$ of experiments she chooses, the protocol defines a map $d\to e$, \ie a way of simulating $e$ using $d$.

However, one gets different notions of a map depending on the powers given to students. In particular, the fabricated outcome for $x$ might depend on the measurement of $\pi(x)$ either deterministically or stochastically, using classical and pre-shared randomness. There are also other powers student might be given that we do not study in this work. For example, the students might be allowed to randomize which measurements they are calling upon based on earlier measurement results, or they might be allowed to use a fixed quantum resource.

\subsection{Related work}
Notions of morphisms between empirical models or measurement scenarios have been defined before, \eg in \cite{wester2017almost} and in \cite{shane2013thesis}. The key features we add to obtain more general morphisms are (i) stochasticity and (ii) morphisms with an underlying simplicial relation rather than just a simplicial function. This additional generality is needed for the main results.

Somewhat similar ideas are developed in \cite{amaral2017noncontextual}, and while not written with the sheaf-theoretic language, a rough comparison is possible: given a box (more or less corresponding to an empirical model), they consider the set of boxes one can obtain by pre- and postprocessing with non-contextual boxes. On one hand, their framework allows for probabilistic preprocessing, which we leave for further work. On the other hand, the preprocessing step allows only for a single measurement to be called upon, whereas we allow the simulation of measurement to depend on multiple measurements.

\section{Background}\label{sec:background}

In this section, we set up some notation and briefly cover the basics of the sheaf-theoretic approach to contextuality. We suggest reading \cite{abramsky_brandenburger_sheaf-theoretic_structure} for a more detailed introduction. The only major differences to usual presentations are that we work over a semifield instead of a semiring (to allow for conditional distributions) and that we recall the definition of the Kleisli category of a monad.

\begin{definition} A (finite) measurement scenario is a tuple $\langle X,\mathcal{M},(O_x)_{x\in X}\rangle$, where $X$ is a finite set of measurements, and $(O_x)_{x\in X}$ is an $X$-indexed family of finite nonempty sets of outcomes, $O_x$ being the outcome set for the measurement $x$. The remaining item $\mathcal{M}$ is a \emph{measurement cover}, \ie an antichain $\mathcal{M}\subset \mathcal{P}(X)$ that covers $X$. The members of $\mathcal{M}$ are called (maximal) measurement contexts.
\end{definition}

The intuition is that $X$ encodes the set of possible measurements on some system. However, it might not be feasible to simultaneously measure any subset $A\subset X$. This is captured by the measurement cover $\mathcal{M}$ -- a subset $A\subset X$ is jointly measurable precisely when it is contained in some maximal measurement context, \ie there is $C\in\mathcal{M}$ such that $A\subset C$. In other words, the jointly measurable subsets form an abstract simplicial complex. We routinely think of the pair $\langle X,\mathcal{M}\rangle$ as a simplicial complex rather than making a distinction between the measurement cover and the simplicial complex generated by it -- after all, a simplicial complex can be given by listing all its faces or only its (inclusion) maximal faces. This geometric viewpoint turns out to be quite handy, so we will recall some terminology needed. 

\begin{definition} Given two simplicial complexes (given by their antichains of maximal faces) $\langle X,\mathcal{M}\rangle $ and $\langle Y,\mathcal{N}\rangle$, a relation $\pi\colon X\to Y$ is a \emph{simplicial relation} if it maps faces to faces, \ie, if for every $C\in\mathcal{M}$ we have $\pi C\subset D$ for some $D\in\mathcal{N}$.
\end{definition}

Next, we recall some terminology that uses the outcome sets. The mapping $U\mapsto \prod_{x\in U}O_x$ induces a functor (in fact a sheaf when $X$ is given the discrete topology) $\mathcal{P}(X)\op\to \cat{Set}$, that we call the event sheaf. We will denote it by $\mathcal{E}_X\colon \mathcal{P}(X)\op\to \cat{Set}$. The action of the functor on an inclusion $U\subset V$ is given the obvious projection: a family $s=(s_x)_{x\in V}$ is mapped to $s|_U:=(s_x)_{x\in U}$. Elements of $\mathcal{E}_X(U)$ are called \emph{$U$-sections} and when $U=X$, they are called \emph{global sections}. To keep the empirical picture in mind, one should think of a $U$-section as an assignment that gives for each measurement in $U$ its outcome.

In the sheaf-theoretic framework, an empirical model is formalised as a family of distributions, each distribution being a distribution over the events in a measurement context. However, one wants some leeway in defining what a `distribution' means, and hence we define them rather generally. Breaking from tradition, we are slightly less general than usual -- we define distributions over semifields rather than semirings, because we want to be able to define conditional distributions. This does not exclude examples of interest that we are aware of, and only Theorem~\ref{thm:Vorobev} uses this restriction. Recall that a semifield is a commutative semiring $(R,+,\cdot,0,1)$ in which non-zero elements have multiplicative inverses. Informally, a semifield is like a field except that subtraction need not be possible, as in the semifield of nonnegative real numbers.

\begin{definition} Let $(R,+,\cdot,0,1)$ be a semifield. A (finite) $R$-distribution on a set $X$ is a function $e\colon X\to R$ such that 
	\begin{itemize} 
			\item $\supp e:=\{x\in X\mid e(x)\neq 0\}$ is finite 
			\item $\sum_{x\in X}  e(x)=1$
	\end{itemize}
For set $X$, the set of $R$-distributions on it is denoted by $D_R(X)$.
The independent product of $R$-distributions $e\in D_R(X),$ and $d\in D_R(Y)$ is the distribution $e\otimes d \in D_R(X\times Y)$ defined by by $(e\otimes d) (x,y)=e(x)d(y)$.
When $e\colon X\times Y\to R$ is an $R$-distribution, we define $\supp_Y e:=\{y\in Y\mid \sum_{x\in X}e(x,y)\neq 0\}$. For $y\in\supp_Y e$ we define the \emph{conditional} $R$-distribution $e(- | y)$ by 
    \begin{equation*} e(x| y):=e(-| y)(x)=e(x,y)/\sum_{x\in X}e(x,y)
    \end{equation*}
Taking $e(y)$ to be a shorthand for $\sum_{x\in X} e(x,y)$, the usual laws concerning conditional distributions hold, namely 
    \begin{equation*} e(x| y)=e(x,y)/e(y)\qquad \text{and}\qquad e(x,y)=e(x | y)e(y).
    \end{equation*} 
\end{definition}
One can think of an $R$-distribution on $x$ as a formal linear combination $\sum_i^n r_ix_i$ where the coefficients $r_i\in R$ sum to $1$. The assignment $X\mapsto D_R(X)$ extends to a functor $D_R\colon\cat{Set}\to\cat{Set}$ by sending a function $f\colon X\to Y$ to the function $D_R(f)\colon D_R(X)\to D_R(Y)$ defined by 
	\begin{equation*} D_R(f)(e)(y)=\sum_{x\in f^{-1}y}e(x)
	\end{equation*}	
written in terms of formal convex combinations, we can equivalently write the action of $D_R$ on functions as 
    \begin{equation*} D_R(f)(\sum_i r_ix_i):=\sum_i r_i f(x_i)
    \end{equation*}
The functor $D_R$ is in fact a monad (see \cite{jacobs2010convexity} and references therein) on \cat{Set}, with unit $\eta\colon\id[\cat{Set}]\to D_R$ defined by 
    \begin{equation*} \eta_X(x)(y)=1x=\begin{cases} 1 &\text{if }x=y \\
    0 &\text{otherwise.}
    \end{cases} \end{equation*}
and multiplication $\mu\colon D_R\circ D_R\to D_R$ defined by ``matrix multiplication'', \ie
    \begin{equation*} \mu(\sum_i^n r_i e_i)(x)=\sum_i^n (r_i e_i(x))
    \end{equation*}

We recall (see \eg \cite[Proposition 4.1.6]{borceuxvol2} for more) 
the definition of a Kleisli category for a monad. 
\begin{definition} Let $(T,\mu,\eta)$ be a monad on a category \cat{C}. The \emph{Kleisli-category of }$T$, denoted by \cat{C_T} has the same objects as \cat{C}, but a morphism $A\to B$ in \cat{C_T} is given by a morphism $A\to TB$ in \cat{C}. The composite of $f\colon A\to TB$ and of $g\colon B\to TC$ is given by $\mu\circ T(g)\circ f\colon A\to TB\to TTC\to TC$. The identity morphism $A\to A$ in \cat{C_T} is given by $\eta_A\colon A\to TA$ in \cat{C}. There is a canonical functor $\cat{C}\to \cat{C_T}$ given by $f\mapsto \eta\circ f$.
\end{definition}

For a semifield $R$, we write \cat{Set_R} instead of \cat{Set_{D_R}} to avoid double subscripts, and we write $F_R\colon Set\to Set_R$ for the canonical functor. We record for future use the fact that taking product distributions behaves well with the functor $D_R$.

\begin{lemma}\label{lem:prodsarenatural}
The map $(e,d)\mapsto e\otimes d$ defines a natural transformation $D_R(-)\times D_R(-)\to D_R(-\times -)$.
\end{lemma}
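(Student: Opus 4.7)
The plan is a direct naturality check. Given morphisms $f\colon X\to X'$ and $g\colon Y\to Y'$ in \cat{Set}, I need to verify that the square
\begin{equation*}
\begin{array}{ccc}
D_R(X)\times D_R(Y) & \xrightarrow{\ \otimes\ } & D_R(X\times Y) \\
{\scriptstyle D_R(f)\times D_R(g)}\downarrow & & \downarrow{\scriptstyle D_R(f\times g)} \\
D_R(X')\times D_R(Y') & \xrightarrow{\ \otimes\ } & D_R(X'\times Y')
\end{array}
\end{equation*}
commutes. So I would fix a pair $(e,d)\in D_R(X)\times D_R(Y)$ and chase an arbitrary $(x',y')\in X'\times Y'$ around both paths.

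Going right then down, using the definitions of $\otimes$ and of $D_R$ on functions,
\begin{equation*}
D_R(f\times g)(e\otimes d)(x',y')
=\sum_{(x,y)\in (f\times g)^{-1}(x',y')} e(x)\,d(y)
=\sum_{x\in f^{-1}x'}\sum_{y\in g^{-1}y'} e(x)\,d(y).
\end{equation*}
Going down then right,
\begin{equation*}
\bigl(D_R(f)(e)\otimes D_R(g)(d)\bigr)(x',y')
=\Bigl(\sum_{x\in f^{-1}x'} e(x)\Bigr)\Bigl(\sum_{y\in g^{-1}y'} d(y)\Bigr).
\end{equation*}
These two expressions agree by distributivity of multiplication over addition in the semifield $R$, together with finiteness of the supports of $e$ and $d$, which guarantees that all the sums involved are finite and may be freely rearranged.

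I do not anticipate any real obstacle here: the content of the lemma is essentially Fubini-for-finite-sums combined with bilinearity of multiplication, and the finite-support clause in the definition of $D_R$ is precisely what makes these manipulations legitimate without any convergence hypotheses. The only notational care needed is to distinguish the product distribution $e\otimes d$ from the pairwise action $D_R(f)\times D_R(g)$ on the product set $D_R(X)\times D_R(Y)$.
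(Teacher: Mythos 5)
Your proof is correct and follows essentially the same route as the paper: a direct check that the naturality square commutes, which the paper carries out in formal-convex-combination notation and you carry out pointwise; the two calculations are equivalent.
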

\begin{proof} 
    Given $R$-distributions $e=\sum_i r_ix_i$ and $d=\sum_j s_jy_j$, we calculate as follows:
    \begin{align*}
        &D_R(f\times g)(e\otimes d)= D_R(f\times g)\sum_{i,j} r_is_j (x_i,y_j)\\
           & =\sum_{i,j} r_is_j (fx_i,gy_j)  =\sum_i r_ifx_i\otimes\sum_j gy_j =(D_R(f)e)\otimes (D_R(g)d) &\qedhere
    \end{align*}
\end{proof}

One routinely composes the functor $D_R\colon \cat{Set}\to\cat{Set}$ with the event sheaf $\mathcal{E}_X$ to obtain a presheaf $D_R\circ \mathcal{E}_X\colon \mathcal{P}(X)\op\to \cat{Set}$, which assigns to each set $U$ of measurements the set of $R$-distributions over $U$-sections. Given an inclusion $U\subset V$, we obtain a map $D_R(\mathcal{E}_X(V))\to D_R(\mathcal{E}_X(U))$, given by marginalization: $e\in D_R(\mathcal{E}_X(V))$ is mapped to $e|_U\in D_R(\mathcal{E}_X(U))$ defined by 

\begin{equation*} e|_U s=\sum_{t\in \mathcal{E}_X(V),t|_u=s}e(t).
\end{equation*}

\begin{definition}
For a semifield $R$, an $R$-empirical model over the measurement scenario $\langle X,\mathcal{M},(O_x)_{x\in X}\rangle$ is given by \emph{a compatible family} $e=(e_C)_{C\in\mathcal{M}}$ of $R$-distributions over $C$-sections. More precisely, each $e_C\in D_R(\mathcal{E}(C))$ and the family being compatible means that the distributions agree on overlaps, \ie $e_C|_{C\cap D}=e_D|_{C\cap D}$ for all $C,D\in\mathcal{M}$. If $U\subseteq C$ for some $C\in \mathcal{M}$, we write $e|_U$ instead of $e_C|_U$: after all, $e|_U$ does not depend on the choice of $C\in\mathcal{M}$. We denote $e$ being an empirical model over the measurement scenario $\langle X,\mathcal{M},(O_x)_{x\in X}\rangle$ by $e:\langle X,\mathcal{M},(O_x)_{x\in X}\rangle$.
 
An $R$-empirical model $e:\langle X,\mathcal{M},(O_x)_{x\in X}\rangle$ is
    \begin{itemize}
        \item \emph{$R$-non-contextual} if there is $d\in D_R(\mathcal{E}(X))$ such that $e_C=d|_C$ for all $C\in\mathcal{M}$, \ie if there exists a distribution on global sections explaining it.
        \item \emph{$R$-contextual} if it is not $R$-non-contextual. 
        \item \emph{strongly $R$-contextual} if there is no global section consistent with it, \ie if there is no $s\in \mathcal{E}(X)$ such that $s|_C\in \supp(e_C)$ for all $C\in\mathcal{M}$
\end{itemize}

Given $R$-empirical models $e^1,\dots ,e^n:\langle X,\mathcal{M},(O_x)_{x\in X}\rangle$ over the same measurement scenario and $r_1,\dots ,r_n\in R$ such that $\sum_{i=1}^n r_i=1$, we define the $R$-empirical model $\sum_{i=1}^n r_i\cdot e^i$ by $(\sum_{i=1}^n r_i\cdot e^i)_C=\sum_{i=1}^n r_i\cdot e^i_C$.

The \emph{non-contextual fraction} of an $\mathbb{R}^+$-empirical model $e$ is defined as the maximal $\lambda\in [0,1]$ such that $e$ decomposes as $e=\lambda e^{NC}+(1-\lambda) e'$ where $e^{NC}$ and $e'$ are empirical models with $e^{NC}$non-contextual. It is denoted by $NCF(e)$, and the \emph{contextual fraction} of $e$ is defined as $CF(e):=1-NCF(e)$.
\end{definition}

Strictly speaking, the above definition defines so-called no-signalling empirical models, and more general empirical models are obtained by dropping the compatibility conditions. Since we only work with no-signalling empirical models, we've chosen to just call them empirical models.

In practice, $R$ is usually one of the following semifields:
\begin{itemize}
    \item The semifield $(\mathbb{R}^{+},+,\cdot,0,1)$ of nonnegative reals, giving rise to finitely supported probability distributions. Usually $\mathbb{R}^{+}$-contextuality is either called probabilistic contextuality or just contextuality.
    \item The field $(\mathbb{R},+,\cdot,0,1)$ of real numbers, giving rise to finitely supported \emph{signed} measures. In \cite{abramsky_brandenburger_sheaf-theoretic_structure} it is shown that being $\mathbb{R}$-non-contextual is equivalent to being no-signalling. 
    \item The semifield $(\mathbb{B},\lor,\land,0,1)$ of Booleans. Now finite $\mathbb{B}$-distributions on $X$ correspond to finite subsets of $X$, and $\mathbb{B}$-contextuality is known as logical contextuality or possibilistic contextuality.
\end{itemize}

Since all non-signalling models over $\mathbb{R}$ are non-contextual and hence not strongly contextual, the notion of strong $\mathbb{R}$-contextuality is somewhat uninteresting. Hence one usually cares about strong contextuality over the positive reals or the booleans, and these notions are in fact equivalent, i.e. an $\mathbb{R}^{+}$-empirical model is strongly contextual iff its possibilistic collapse (defined later) is. Thus in the sequel we only talk about strong contextuality omitting the semifield $R$.

\section{Defining morphisms}\label{sec:definingmorphisms}

Before we give the full formal definitions, let us sketch them using the intuitive story from the introduction as guidance. First of all, any measurement $x\in X$ is assumed to be simulated by doing a set of measurements $\pi(x)\subset Y$. Because the students are supposed to work independently once the professor sets them to work, $\pi(x)$ should only depend on $x$ and not on the set of other measurements performed with it. Hence we get a relation $\pi\colon X\to Y$. For the simulation to be successful, for any $C\in \mathcal{M}$ chosen, one must be able to perform the measurement $\pi(C)$, meaning that $\pi(C)$ has to be a jointly measurable subset of $Y$. In other words, $\pi$ must be a simplicial relation.  

Once $\pi$ has been fixed, how about the rest of the simulation protocol? For any joint outcome $s\in \mathcal{E}_Y(\pi C)$ for the measurements in $\pi(C)$, one must know what outcome in $\mathcal{E}_X(C)$ to output. Because different measurements in $X$ are measured by different students, possibly at different locations, we assume that the students cannot coordinate after each student responsible for measuring $x$ receives the outcomes for measurements in $\pi(x)$. Any coordination (including any possible shared \& classical randomness) must be done before various students are instructed to perform their measurements by the professor. Even worse, each student is just told to perform the measurement $x$ without being revealed the full context containing $x$.

Thus the protocol in fact defines, for any $U\subset X$, a map $\mathcal{E}_Y(\pi U)\to \mathcal{E}_X(U)$ (and if randomness is allowed, this map lives in the Kleisli category \cat{Set_R}). After all, if an arbitrary, not necessarily possible $U$ was chosen and for each $x\in U$ the result for $\pi(x)$ was given, the students would still know which result $O^U$ to output. Moreover, the students acting independently implies that the family of maps  $\mathcal{E}_Y(\pi U)\to \mathcal{E}_X(U)$  has to be a natural transformation: it should not matter whether the professor commissions the experiment $V$ and then throws away some of the results to get a result $s\in \mathcal{E}_X(U)$, or whether the students are told to perform $U$ (thus ignoring any results in $\pi(V)\setminus \pi(U)$) straight away.

Having been suitably motivated, we proceed to the actual definitions and discuss examples.
\begin{definition} Let $\langle X,\mathcal{M},(O_x)_{x\in X}\rangle$ and $\langle Y,\mathcal{N},(P_y)_{y\in Y}\rangle$ be measurement scenarios. A \emph{deterministic morphism} $\langle Y,\mathcal{N},(P_y)_{y\in Y}\rangle\to \langle X,\mathcal{M},(O_x)_{x\in X}\rangle$ consists of: 
   \begin{itemize}
        \item a simplicial relation $\pi\colon X\to Y$;
        \item a natural transformation $\sigma \colon\mathcal{E}_Y(\pi (-))\to \mathcal{E}_X(-)$.
    \end{itemize}
    Given an empirical model $d\colon \langle Y,\mathcal{N},(P_y)_{y\in Y}\rangle$, its pushforward along a deterministic morphism $(\sigma,\pi)$ is the empirical model $\sigma_* d\colon\langle X,\mathcal{M},(O_x)_{x\in X}\rangle$ defined by
            \begin{equation*} \sigma_* d_C=D_R(\sigma_C)(d|_{\pi(C)})
            \end{equation*} 
    Let $e\colon \langle X,\mathcal{M},(O_x)_{x\in X}\rangle$ and $d\colon \langle Y,\mathcal{N},(P_y)_{y\in Y}\rangle$ be empirical models. Then a \emph{deterministic simulation} $d\to e$ consists of a deterministic morphism  $\langle Y,\mathcal{N},(P_y)_{y\in Y}\rangle\to \langle X,\mathcal{M},(O_x)_{x\in X}\rangle$ such that 
            \begin{equation*} e=\sigma_* d
            \end{equation*} 
    The category of $R$-empirical models and deterministic  simulations is denoted by \cat{EmpDet_R}.
\end{definition} 

This definition can be motivated from the mathematical point of view as well, for example, via an analogy to algebraic geometry\footnote{This analogy is not perfect, of course. For one, we define our morphisms to go in the opposite direction compared to the underlying map between simplicial complexes. This is to guarantee that ``points'' of $e$, \ie maps from the terminal object to $e$ correspond to distributions on global sections explaining $e$, but one could have reasonably chosen the opposite convention. Perhaps a more surprising difference is that instead of the direct image sheaf we use the inverse image sheaf. Since in our set-up all topological spaces are finite and discrete, using the (in general more complicated) inverse image functor poses no problems.}. Roughly speaking, a scheme is a pair $(X,\mathcal{O}_X)$, where $X$ is a topological space and $\mathcal{O}_X$ is a sheaf of local rings on it, and a morphism of schemes $(X,\mathcal{O}_X)\to (Y,\mathcal{O}_Y)$ consists of a continuous map $\pi\colon X\to Y$ and of a natural transformation $\mathcal{O}_Y\to \pi_* \mathcal{O}_X$ subject to some axioms. Similarly, a measurement scenario comes with an associated simplicial complex and with a presheaf of outcomes/distributions on the simplices, so it is reasonable to expect that a morphism of measurement scenarios consists of an underlying map between simplicial complexes and of a transformation between the two presheaves (suitably composed with the map of simplicial complexes). Moreover, an empirical model is a compatible family over the presheaf. Hence a morphism of empirical models should be a morphism between the underlying measurement scenarios taking one compatible family to the other.

\begin{remark} We did not restrict ourselves to no-signalling models merely because of disinterest: the definition above would not work as-is without no-signalling: after all, for the pushforward to be well-defined, $d|_{\pi(C)}$ should not depend on which maximal context containing $\pi(C)$ is chosen, as long as there is one.
\end{remark} 

\begin{lemma}\label{lem:gluingdetmaps}[Deterministic morphisms can be glued together] Let $\langle X,\mathcal{M},(O_x)_{x\in X}\rangle$ and $\langle Y,\mathcal{N},(P_y)_{y\in Y}\rangle$ be measurement scenarios and $\pi\colon X\to Y$ a simplicial relation. Let $U_1,\dots U_n$ be a cover of $X$ and assume we are given a compatible family of natural transformations $\sigma^i\colon \mathcal{E}_{Y}(\pi(-))\to \mathcal{E}_{U_i}(-)$, \ie each $\sigma^i$ is a map of presheaves on $U_i$, and the $\sigma^i$ agree on overlaps, meaning that $\sigma^i_{U_i\cap U_j}=\sigma^j_{U_i\cap U_j}$ for every $i,j$. Then there is a unique morphism $(\pi,\sigma)$ of measurement scenarios such that $\sigma$ restricts to $\sigma^i$ on each $\mathcal{P}(U_i)$.
\end{lemma}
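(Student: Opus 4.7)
My plan is to exploit the fact that the event sheaf is a product: $\mathcal{E}_X(U)=\prod_{x\in U}O_x$, so a natural transformation $\sigma\colon \mathcal{E}_Y(\pi(-))\to \mathcal{E}_X(-)$ is entirely determined by its components at singletons. Thus I will define $\sigma$ by specifying what it does pointwise, and use the compatibility on overlaps to make sure the definition is unambiguous.

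Concretely, for each $x\in X$ pick some $i(x)$ with $x\in U_{i(x)}$ and define $\sigma_{\{x\}}\colon \mathcal{E}_Y(\pi(x))\to \mathcal{E}_X(\{x\})=O_x$ to be $\sigma^{i(x)}_{\{x\}}$; then for an arbitrary $U\subseteq X$ and $s\in \mathcal{E}_Y(\pi U)$, set
\begin{equation*}
(\sigma_U(s))_x := \sigma_{\{x\}}(s|_{\pi(x)})\qquad (x\in U).
\end{equation*}
The first check is that this is independent of the choice of $i(x)$: if $x\in U_i\cap U_j$, then the hypothesis $\sigma^i_{U_i\cap U_j}=\sigma^j_{U_i\cap U_j}$ plus naturality of $\sigma^i$ and $\sigma^j$ (both restricted further to $\{x\}\subseteq U_i\cap U_j$) forces $\sigma^i_{\{x\}}=\sigma^j_{\{x\}}$. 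This is the one spot where the cocycle hypothesis actually gets used.

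Next I verify that $\sigma$ is a natural transformation: given $U\subseteq V$ and $s\in\mathcal{E}_Y(\pi V)$, both $\sigma_V(s)|_U$ and $\sigma_U(s|_{\pi U})$ have $x$-component $\sigma_{\{x\}}(s|_{\pi(x)})$ for each $x\in U$, hence are equal. Then I check the restriction property: for $U\subseteq U_i$ and $s\in\mathcal{E}_Y(\pi U)$, naturality of $\sigma^i$ gives $(\sigma^i_U(s))_x=\sigma^i_{\{x\}}(s|_{\pi(x)})$ for each $x\in U$, which equals $(\sigma_U(s))_x$ by the well-definedness step (choosing $i(x)=i$). Uniqueness is the dual observation: any natural transformation $\sigma$ restricting to the $\sigma^i$ satisfies $\sigma_{\{x\}}=\sigma^{i(x)}_{\{x\}}$, and by naturality its value on any $U$ is forced to agree componentwise with the formula above.

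I expect no genuine obstacle here — the argument is the standard sheaf-gluing pattern, made entirely elementary because $\mathcal{E}_X$ is just a product presheaf, so gluing reduces to specifying each coordinate. The only subtlety worth flagging is the slightly indirect way that the overlap hypothesis (stated on the whole intersection $U_i\cap U_j$) gets reduced to equality at singletons, which is the justification for the construction being well-defined.
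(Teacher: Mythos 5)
Your proof is correct, but it takes a genuinely more elementary route than the paper. The paper disposes of this lemma in one line: $\mathcal{E}_X$ is a sheaf (on the discrete space $X$), and natural transformations into a sheaf glue uniquely along a cover, citing Mac\,Lane--Moerdijk. You instead unwind what that abstract fact means for the specific sheaf at hand: since $\mathcal{E}_X(U)=\prod_{x\in U}O_x$ is a product, a natural transformation into it is determined by its singleton components, so you build $\sigma$ coordinatewise and verify well-definedness, naturality, restriction and uniqueness by hand. This is self-contained and has the side benefit of making explicit the fact that $\sigma$ is pinned down by its values at singletons --- a fact the paper itself relies on later (``by Lemma~\ref{lem:gluingdetmaps} this is enough to specify $\sigma$''). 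One small point you pass over: deducing $\sigma^i_{\{x\}}=\sigma^j_{\{x\}}$ from the hypothesis $\sigma^i_{U_i\cap U_j}=\sigma^j_{U_i\cap U_j}$ via naturality needs every section in $\mathcal{E}_Y(\pi(x))$ to extend to a section in $\mathcal{E}_Y(\pi(U_i\cap U_j))$, i.e.\ surjectivity of the restriction maps of the event sheaf; this holds because the outcome sets $P_y$ are required to be nonempty, and is worth a sentence, but it is not a genuine gap.
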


\begin{proof} The presheaf $\mathcal{E}_X$ is in fact a sheaf, so this is a consequence of the fact that morphisms of sheaves glue. See \eg \cite[Proposition 2.8.1.]{maclanesheaves} 
\end{proof}

\begin{example} Consider the $\mathbb{R}^+$-empirical model $e:\langle \{x,y\},\{\{x,y\}\},(\{0,1\},\{0,1\})\rangle$ given by $e(x\mapsto 1,y\mapsto 0)=e(x\mapsto 0,y\mapsto 1)=1$ and $e(s)=0$ otherwise, \ie $e$ consists of two perfectly anticorrelated flips of a fair coin. Then all the information about the second coin flip is already (deterministically) present in the first, and this is made precise by the deterministic simulation $(\pi,\sigma)\colon e|_x\to e$ given by $\pi(x)=\pi(y)=x$ and $\sigma_{\{x,y\}}(x\mapsto a)=(x\mapsto a,y\mapsto (1-a))$.
\end{example}

\begin{example}[Restriction] If $e\colon\langle X,\mathcal{M},(O_x)_{x\in X}\rangle$ is an empirical model and $i\colon Y\hookrightarrow X$ is a subset of $X$ with inclusion $i\colon Y\to X$, one gets a restricted empirical model $e|_Y\colon \langle Y,\{C\cap Y|C\in\mathcal{M}\},(O_x)_{x\in Y}\rangle$ defined by $(e|_Y)_{C\cap Y}=e_{C\cap Y}$.  One can obviously simulate the restricted empirical model using the larger one, and this corresponds to the morphism $(i,\id)\colon e\to e|_Y$.
\end{example}

\begin{example}[Coarse-graining] Given family of functions $f=(f_x\colon O_x\to P_X)P$, one can define, for $e\colon\langle X,\mathcal{M},(O_x)_{x\in X}\rangle$ the coarse-graining of $e$ along $f$ as the empirical model $e/f\colon \langle X,\mathcal{M},P\rangle$, defined via $(e/f)_C(s):=\sum_{t\in \mathcal{E}_X(C), f|_C (t)=s}e_C(t)$. Applying $f$ to the outcome lets one simulate $e/f$ using $e$, and this is made precise by the morphism $(\id,\sigma)\colon e\to e/f$, where $\sigma$ is the natural transformation induced by $f$.
\end{example}

\begin{example} We now exhibit a situation where $\pi$ is not a function but a relation. Consider the measurement scenario $\langle \{x,y,z\},\{\{x,y,z\}\},(\{0,1\},\{0,1\})\rangle$ of three jointly measurable binary variables and let $e$ be $\mathbb{R}^+$-empirical model over it corresponding to the variables $x$ and $y$ being independent flips of a fair coin and $z$ being the total number of heads modulo 2. Then there is no deterministic simulation $(\pi,\sigma)\colon e|_{\{x,y\}}\to e$ where $\pi$ is a function, because one cannot infer the value of $z$ knowing the value of only one of $x,y$. Informally, one can clearly simulate $e$ by $e|_{\{x,y\}}$: to know the value of $z$, measure both $x$ and $y$ and calculate, and the values of $x$ and $y$ are given by themselves. This is made formal by the deterministic simulation $(\pi,\sigma)\colon e|_{\{x,y\}}\to e$ where $\pi$ is given by $\pi(z)=\{x,y\}$ and $\pi|_{\{x,y\}}=\id$ and $\sigma_{\{x,y\}}=\id$, $\sigma_{z}(x\mapsto a,y\mapsto b)=z\mapsto (a+b\mod 2)$ (note that by Lemma~\ref{lem:gluingdetmaps} this is enough to specify $\sigma$).
\end{example}

\begin{example}\label{ex:hardy} In \cite{mansfieldfritz2012hardy} the authors show that in any $(2,2,l)$ or $(2,k,2)$ Bell scenario, the occurrence of Hardy's paradox is a necessary and sufficient condition for the model to be possibilistically contextual. The crucial thing to note is that here ``occurrence'' means finding a certain partially filled table inside the empirical model in question. One can make sense of this in our framework by saying that there are several versions of Hardy's paradox -- one for each of the ways of completing the table -- and being possibilistically contextual is equivalent to being able to (deterministically) simulate one of these.
\end{example}

\begin{definition} Let $\langle X,\mathcal{M},(O_x)_{x\in X}\rangle$ and $\langle Y,\mathcal{N},(P_y)_{y\in Y}\rangle$ be measurement scenarios. A \emph{($R$-stochastic) morphism} $\langle Y,\mathcal{N},(P_y)_{y\in Y}\rangle\to \langle X,\mathcal{M},(O_x)_{x\in X}\rangle$ consists of 
    \begin{itemize}
        \item a simplicial relation $\pi\colon X\to Y$;
        \item a natural transformation $\sigma \colon \mathcal{E}_Y(\pi (-))\to D_R\circ \mathcal{E}_X(-)$. 
    \end{itemize}
    The composite of 
        \begin{align*} &(\pi,\sigma)\colon \langle Z,\mathcal{K},(Q_z)_{z\in Z}\rangle\to \langle Y,\mathcal{N},(P_y)_{y\in Y}\rangle \text{ and} \\
        &(\rho,\tau)\colon \langle Y,\mathcal{N},(P_y)_{y\in Y}\rangle\to \langle X,\mathcal{M},(O_x)_{x\in X}\rangle
        \end{align*}
    is given by $(\rho\circ\pi,\mu\circ D_R(\rho)\circ\tau)$ \ie by ordinary composition in the first variable and Kleisli composition in the second.
    Given an empirical model $d\colon \langle Y,\mathcal{N},(P_y)_{y\in Y}\rangle$, its pushforward along a morphism $(\sigma,\pi)$ is the empirical model $\sigma_* d\colon\langle X,\mathcal{M},(O_x)_{x\in X}\rangle$ defined by
            \begin{equation*} \sigma_* d_C=\mu D_R(\sigma_C)(d|_{\pi(C)})
            \end{equation*} 
    Let $e\colon \langle X,\mathcal{M},(O_x)_{x\in X}\rangle$ and $d\colon \langle Y,\mathcal{N},(P_y)_{y\in Y}\rangle$ be empirical models. Then a \emph{simulation} $d\to e$ consists of an $R$-stochastic morphism  $\langle Y,\mathcal{N},(P_y)_{y\in Y}\rangle\to \langle X,\mathcal{M},(O_x)_{x\in X}\rangle$ such that 
            \begin{equation*} e=\sigma_* d
            \end{equation*} 
    The category of $R$-empirical models and simulations is denoted by \cat{Emp_R}.
\end{definition}

\begin{remark} In the previous definition, instead of a natural transformation $\sigma \colon \mathcal{E}_Y (\pi (-))\to D_R\circ \mathcal{E}_X (-)$ we could have equivalently asked for a natural transformation $\sigma\colon F_R\circ \mathcal{E}_Y (\pi (-))\to F_R\circ \mathcal{E}_X(-)$, \ie for a natural transformation between two functors $\mathcal{P}(X)\op\to \cat{Set_R}$. This makes it apparent that composition is associative.
\end{remark}

Given the terminology, one would expect deterministic morphisms (deterministic simulations) to be special cases of morphisms (simulations). Indeed, this is the case -- one just needs to apply the inclusion $F_R\colon\cat{Set}\to\cat{Set_R}$ to the Kleisli category. To be more specific, if $(\pi,\sigma)$ is a deterministic morphism, then $(\pi,F_R \sigma)$ is a morphism. That the same claim holds for simulations boils down to the fact that the pushforward of an empirical model does not depend on whether one uses $(\pi,\sigma)$ or $(\pi,F_R \sigma)$. We note that for a simulation $(\pi,\sigma)\colon d\to e$ one only needs the parts of $d$ that $\pi$ maps to.

\begin{lemma}\label{lem:image} Any simulation $(\pi,\sigma)\colon d\to e$ factors through the restriction $d\to d|_{\pi(X)}$.
\end{lemma}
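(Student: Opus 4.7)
The plan is to exhibit an explicit factorization $(\pi,\sigma)=(\pi',\sigma')\circ(i,\eta)$, where $(i,\eta)\colon d\to d|_{\pi(X)}$ is the restriction simulation from the Restriction example (with $i\colon\pi(X)\hookrightarrow Y$ the inclusion, and identity natural transformation, which in the Kleisli picture is $\eta$), and $(\pi',\sigma')\colon d|_{\pi(X)}\to e$ is obtained by ``restricting the codomain of $\pi$ to its image.'' Concretely, take $\pi'\colon X\to \pi(X)$ to be $\pi$ with corestricted codomain; this is still a simplicial relation into $\langle\pi(X),\{D\cap\pi(X)\mid D\in\mathcal{N}\}\rangle$, because for each $C\in\mathcal{M}$, a witnessing $D\in\mathcal{N}$ with $\pi(C)\subseteq D$ also satisfies $\pi'(C)=\pi(C)\subseteq D\cap\pi(X)$.

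For the natural transformation, the point to observe is that for every $U\subseteq X$ one has $\pi'(U)=\pi(U)\subseteq\pi(X)$ and the outcome sets $P_y$ coincide, so $\mathcal{E}_{\pi(X)}(\pi'(U))=\mathcal{E}_Y(\pi(U))$ as sets. Hence $\sigma$ can literally be reused to give a natural transformation $\sigma'\colon \mathcal{E}_{\pi(X)}(\pi'(-))\to D_R\circ\mathcal{E}_X(-)$. To confirm $(\pi',\sigma')$ is a simulation $d|_{\pi(X)}\to e$, one computes
\begin{equation*}
((\sigma')_* d|_{\pi(X)})_C=\mu\circ D_R(\sigma'_C)\bigl((d|_{\pi(X)})|_{\pi'(C)}\bigr)=\mu\circ D_R(\sigma_C)(d|_{\pi(C)})=(\sigma_* d)_C=e_C,
\end{equation*}
where the middle equality uses that $\pi(C)\subseteq\pi(X)$, so restricting $d|_{\pi(X)}$ further to $\pi(C)$ agrees with restricting $d$ directly.

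Finally, verify that composing $(i,\eta)$ with $(\pi',\sigma')$ recovers $(\pi,\sigma)$. On simplicial relations this is immediate: $i\circ\pi'=\pi$ by construction. On natural transformations, the Kleisli composite is $\mu\circ D_R(\sigma')\circ\eta$, which by naturality of $\eta$ equals $\mu\circ\eta\circ\sigma'=\sigma'=\sigma$, using the triangle identity for the monad $D_R$. This gives the claimed factorization. The only mildly delicate point, and the closest thing to an obstacle, is the identification $\mathcal{E}_{\pi(X)}(\pi'(U))=\mathcal{E}_Y(\pi(U))$ that lets $\sigma'$ be taken to be $\sigma$ on the nose; everything else is a direct application of the definitions of restriction, pushforward, and Kleisli composition.
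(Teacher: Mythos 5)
Your factorization is correct and is exactly the evident argument the paper leaves implicit (the lemma is stated without proof): corestrict $\pi$ to its image, reuse $\sigma$ verbatim via the identification $\mathcal{E}_{\pi(X)}(\pi'(U))=\mathcal{E}_Y(\pi(U))$, and check the composite with the restriction morphism via the monad unit law. All the verifications (simpliciality of $\pi'$, the pushforward computation, and recovering $(\pi,\sigma)$ on the nose) are sound.
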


Unfortunately Lemma~\ref{lem:gluingdetmaps} fails for $R$-stochastic simulations. This is precisely because $D_R\circ \mathcal{E}_X$ fails to be a sheaf. However, one can glue together maps (though not uniquely) along partitions\footnote{This is because $R$-distributions form a ``gleaf'', see~\cite{friz:gleaves}.}. It is straightforward to extend the following lemma to an $n$-ary partition $(U_1,\dots U_n)$.

\begin{lemma}\label{lem:gluingmaps} Let $\langle X,\mathcal{M},(O_x)_{x\in X}\rangle$ and $\langle Y,\mathcal{N},(P_y)_{y\in Y}\rangle$ be measurement scenarios and $\pi\colon X\to Y$ a simplicial relation. If $(U_1,U_2)$ is a partition of $X$ and $\sigma^i\colon \mathcal{E}_Y(\pi(-)\to D_R\circ \mathcal{E}_{U_i}(-)$ is a natural transformation between presheaves on $U_i$ for $i=1,2$, then there is a natural transformation $\sigma^1\otimes\sigma^2\colon \mathcal{E}_Y(\pi (-))\to D_R\circ \mathcal{E}_X(-)$  that restricts to $\sigma_i$ on $\mathcal{P}(U_i)$,  defined by
    \begin{equation*} (s\colon \pi(V)\to P) \mapsto (\sigma^1\otimes\sigma^2)_{V}(s):=\sigma^1_{V\cap U_1}(s|_{\pi (V\cap U_1)})\otimes\sigma^2_{V\cap U_2}(s|_{\pi (V\cap U_2)})
    \end{equation*}
\end{lemma}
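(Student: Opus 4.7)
The plan is to exploit two facts: first, that because $(U_1,U_2)$ partitions $X$, the presheaf $\mathcal{E}_X$ decomposes multiplicatively along it; second, that by Lemma~\ref{lem:prodsarenatural} the independent product $\otimes$ is natural in both variables. These two facts reduce the claim to naturality of the $\sigma^i$ separately.

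First I would observe that for any $V\subseteq X$ the partition gives a canonical isomorphism $\mathcal{E}_X(V)\cong \mathcal{E}_X(V\cap U_1)\times \mathcal{E}_X(V\cap U_2)$, and for $W\subseteq V$ the restriction map $\mathcal{E}_X(V)\to \mathcal{E}_X(W)$ is the product of the restrictions $\mathcal{E}_X(V\cap U_i)\to \mathcal{E}_X(W\cap U_i)$ for $i=1,2$. Under this identification, the formula for $(\sigma^1\otimes\sigma^2)_V(s)$ does land in $D_R(\mathcal{E}_X(V))$, so the proposed family of maps is well-typed.

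Next I would check naturality. Given $W\subseteq V$ and $s\in \mathcal{E}_Y(\pi V)$, one has to show
\begin{equation*}
 D_R(\mathcal{E}_X(V)\to \mathcal{E}_X(W))\bigl((\sigma^1\otimes\sigma^2)_V(s)\bigr) = (\sigma^1\otimes\sigma^2)_W(s|_{\pi(W)}).
\end{equation*}
The left-hand side is $D_R$ applied to a product of two restriction maps evaluated on a product distribution, so by Lemma~\ref{lem:prodsarenatural} it equals
\begin{equation*}
 D_R(\mathrm{res}_1)\bigl(\sigma^1_{V\cap U_1}(s|_{\pi(V\cap U_1)})\bigr)\otimes D_R(\mathrm{res}_2)\bigl(\sigma^2_{V\cap U_2}(s|_{\pi(V\cap U_2)})\bigr).
\end{equation*}
Now applying naturality of each $\sigma^i$ (together with the observation that the restriction of $s$ along $\pi(W\cap U_i)\subseteq \pi(V\cap U_i)$ coincides with $(s|_{\pi(W)})|_{\pi(W\cap U_i)}$) rewrites each factor as $\sigma^i_{W\cap U_i}((s|_{\pi(W)})|_{\pi(W\cap U_i)})$, which is exactly the right-hand side.

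Finally, for the restriction claim, suppose $V\subseteq U_1$. Then $V\cap U_2=\emptyset$, and $\mathcal{E}_X(\emptyset)$ is a one-point set, so $D_R(\mathcal{E}_X(\emptyset))$ is a singleton whose unique element is a point mass. The independent product of any distribution with this point mass is (under the canonical isomorphism $\mathcal{E}_X(V)\times \{*\}\cong \mathcal{E}_X(V)$) the distribution itself, so $(\sigma^1\otimes\sigma^2)_V(s)=\sigma^1_V(s)$; the symmetric argument works for $V\subseteq U_2$. No step is genuinely hard; the only thing that requires care is tracking the isomorphisms coming from the partition, and this is the place where the hypothesis that $(U_1,U_2)$ is a partition (rather than merely a cover) is used, since $\mathcal{E}_X$ is multiplicative only on disjoint unions.
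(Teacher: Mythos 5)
Your proof is correct and takes essentially the same route as the paper: the paper packages exactly your ingredients --- the decomposition $\mathcal{E}_X(-)\cong\mathcal{E}_X(-\cap U_1)\times\mathcal{E}_X(-\cap U_2)$ along the partition, naturality of each $\sigma^i$, and Lemma~\ref{lem:prodsarenatural} --- into a one-line factorization of $\sigma^1\otimes\sigma^2$ as a composite of natural transformations, whereas you verify the same naturality square pointwise. Your explicit check that the formula restricts to $\sigma^i$ on $\mathcal{P}(U_i)$ (via the singleton $\mathcal{E}_X(\emptyset)$) is a small addition that the paper leaves implicit.
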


\begin{proof} The morphism $\sigma^1\otimes \sigma^2$ factors as a composite of natural transformations 

    \begin{align*}\mathcal{E}_Y (\pi(-))\to \mathcal{E}_Y(\pi(-\cap U_1))\times \mathcal{E}_Y(\pi(-\cap U_2))\to D_R \mathcal{E}_X(-\cap U_1)\times D_R\mathcal{E}_X(-\cap U_2)\to \\ D_R(\mathcal{E}_X(-\cap U_1)\times \mathcal{E}_X(-\cap U_2)) \cong D_R\mathcal{E}_X 
    \end{align*}
    where the first natural transformation is the pairing of the restrictions $\mathcal{E}_Y(\pi(-))\to \mathcal{E}_Y(\pi(-\cap U_i))$, the second is the cartesian product of the natural transformations $\sigma^i$, the third is the independent product distribution map that is natural by Lemma~\ref{lem:prodsarenatural}, and the final isomorphism stems from the fact that $\mathcal{E}_X (-\cap U_1)\times \mathcal{E}_X(-\cap U_2)\cong \mathcal{E}_X$.
\end{proof}

\begin{proposition}\label{convexcombos of maps} Let $(\pi,\sigma^i)\colon d\to e_i$ be simulations, where  each $e_i$ is an empirical model over the same measurement scenario. Then  $(\pi,\sum_{i=1}^n r_i\sigma^i)$ defines a map $d\to \sum_{i=1}^n r_i e_i$, where $\sum_{i=1}^n r_i\sigma^i$ is defined by 
    \begin{equation*}(\sum_{i=1}^n r_i \sigma^i)_U(s):=\sum_{i=1}^n r_i\sigma_U^i(s)\
    \end{equation*}
\end{proposition}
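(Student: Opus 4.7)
The plan is to verify three things in turn: that every value of $\sum_{i=1}^n r_i\sigma^i$ is a legitimate $R$-distribution, that $\sum_{i=1}^n r_i\sigma^i$ is a natural transformation $\mathcal{E}_Y(\pi(-))\to D_R\circ\mathcal{E}_X(-)$, and that its pushforward of $d$ agrees with $\sum_{i=1}^n r_i e_i$. The coefficient condition $\sum_i r_i = 1$ is implicit, inherited from the earlier definition of convex combinations of empirical models; without it the target $\sum_i r_i e_i$ would itself be ill-defined.

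Pointwise well-definedness is immediate: for each $U$ and each $s\in \mathcal{E}_Y(\pi U)$, the $\sigma^i_U(s)$ are $R$-distributions on the finite set $\mathcal{E}_X(U)$, so $\sum_{t}\sum_i r_i\sigma^i_U(s)(t)=\sum_i r_i = 1$ and the support remains finite. For naturality, given $V\subseteq U$, I will use that marginalization $D_R\mathcal{E}_X(U)\to D_R\mathcal{E}_X(V)$ is the action of $D_R$ on a restriction map and hence commutes with $R$-linear combinations of $R$-distributions; combined with the naturality of each $\sigma^i$ separately, this gives $(\sum_i r_i\sigma^i)_U(s)|_V = \sum_i r_i(\sigma^i_U(s)|_V) = \sum_i r_i\sigma^i_V(s|_{\pi V}) = (\sum_i r_i\sigma^i)_V(s|_{\pi V})$.

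The pushforward identity is the main calculation. Writing $d|_{\pi C}=\sum_t \lambda_t\,t$ and unfolding the definitions of $D_R$ on morphisms and of the monad multiplication $\mu$ (matrix multiplication), one obtains $\mu D_R(\tau_C)(d|_{\pi C})(s)=\sum_t\lambda_t\,\tau_C(t)(s)$ for any natural transformation $\tau$ into $D_R\circ \mathcal{E}_X$ and any $s\in \mathcal{E}_X(C)$. Plugging in $\tau = \sum_i r_i\sigma^i$ and interchanging the two finite sums gives $\sum_t\lambda_t\sum_i r_i\sigma^i_C(t)(s) = \sum_i r_i\sum_t\lambda_t\sigma^i_C(t)(s) = \sum_i r_i(\sigma^i_* d)_C(s) = \sum_i r_i e^i_C(s)$, where the final equality uses that each $(\pi,\sigma^i)$ is a simulation $d\to e_i$.

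There is no genuine obstacle here; every step reduces to the $R$-linearity of $D_R$ on morphisms and of $\mu$ in its distribution argument, combined with $\sum_i r_i = 1$. The only care required lies in the order in which the two finite sums are swapped in the pushforward computation. Conceptually, the proposition records that convex combinations of simulations sharing the same underlying simplicial relation $\pi$ glue into a simulation of the convex combination of targets, mirroring the convex structure already present on empirical models themselves.
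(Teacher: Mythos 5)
Your proof is correct and follows essentially the same route as the paper: naturality of $\sum_i r_i\sigma^i$ from the fact that marginalization preserves $R$-linear combinations together with naturality of each $\sigma^i$, and then linearity of the pushforward $\sigma\mapsto\sigma_*d$ to conclude $(\sum_i r_i\sigma^i)_*d=\sum_i r_i e_i$. The only difference is that you spell out explicitly the computations the paper dismisses as ``not hard'' (including the well-definedness of each value as an $R$-distribution), which is fine.
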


\begin{proof} It is not hard to show that convex combinations are preserved by the restriction maps $D_R\mathcal{E}_X(V)\to D_R\mathcal{E}_X(U)$ induced by inclusions $U\to V$, \ie that $(\sum r_i e_i)|_U=\sum r_i (e_i|_V)$. Since each $\sigma_i$ is natural, this implies that $\sum r_i\sigma_i$ is as well. Hence it remains to see that the pushforward of $d$ along $\sum r_i\sigma_i$ is $\sum r_i e_i$. But this is not hard either: 
    \begin{equation*}(\sum r_i \sigma^i)_* d=\sum r_i(\sigma^{i}_* d)=\sum r_i e_i\qedhere
    \end{equation*}
\end{proof}

\begin{example} This time we discuss the examples more informally, leaving the formal description of the morphisms as an exercise. Consider again two jointly measurable variables. The first one is a flip of a fair coin, whereas the second one depends on the first as follows: if the first coin is heads, the second coin flip is fair, whereas if the first coin is tails, the second coin is flipped with a bias of $2/3$. Clearly knowing just the value of the first coin flip lets you simulate the experiment stochastically -- just flip a coin with the appropriate bias. This defines a simulation $e|_x\to e$. One can go even further and simulate $e$ stochastically without using $e$ at all ---just perform the whole experiment yourself. This ability to simulate $e$ from thin air corresponds to non-contextuality of $e$, as will be seen in Theorem~\ref{thm:globalsectionsasmorphisms}.
\end{example}

\section{Reinterpreting contextuality with morphisms}\label{sec:main}

The category \cat{Emp_R} has a terminal object $1$ given by the unique $R$-empirical model on the empty measurement scenario $\langle \emptyset,\emptyset,\emptyset\rangle$. Admitting a morphism from the terminal object is equivalent to non-contextuality.
\begin{theorem}\label{thm:globalsectionsasmorphisms}
Distributions on global sections explaining $e$ are in one-to-one correspondence with simulations $1\to e$.
In particular, an $R$-empirical model is non-contextual iff there is a simulation $1\to e$.
\end{theorem}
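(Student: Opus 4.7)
The plan is to unfold the definition of a simulation out of the terminal object and observe that it immediately reduces to the data of a distribution on global sections explaining $e$.

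First, I would note that since $1$ lives over $\langle \emptyset,\emptyset,\emptyset\rangle$, any simulation $(\pi,\sigma)\colon 1\to e$ is forced to have $\pi\colon X\to \emptyset$ the empty relation, so $\pi(U)=\emptyset$ for every $U\subseteq X$. Consequently the source presheaf $U\mapsto \mathcal{E}_\emptyset(\pi U)$ on $\mathcal{P}(X)\op$ is constant at the singleton $\mathcal{E}_\emptyset(\emptyset)=\{*\}$, with all restriction maps being the identity.

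Second, I would argue that a natural transformation $\sigma$ from this constant presheaf into $D_R\circ \mathcal{E}_X(-)$ is precisely a compatible family $(d_U)_{U\subseteq X}$ with $d_U:=\sigma_U(*)\in D_R(\mathcal{E}_X(U))$ satisfying $d_V|_U=d_U$ whenever $U\subseteq V$. Since $X$ is the maximum of the poset $\mathcal{P}(X)$, naturality applied to the inclusions $U\subseteq X$ forces $d_U=d_X|_U$, so the family is entirely determined by $d:=\sigma_X(*)\in D_R(\mathcal{E}_X(X))$. Conversely, any such $d$ defines a natural transformation $\sigma^d$ by $\sigma^d_U(*):=d|_U$, and the two constructions are mutually inverse.

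Third, I would unwind the pushforward condition. Since $1|_{\pi(C)}=1|_\emptyset$ is the unique Dirac distribution on $\{*\}$, one computes
\[
(\sigma_*1)_C=\mu\,D_R(\sigma_C)(1|_\emptyset)=\mu(\delta_{\sigma_C(*)})=\sigma_C(*)=d|_C,
\]
so $\sigma_*1=e$ holds iff $d|_C=e_C$ for every $C\in\mathcal{M}$, \ie iff $d$ is a distribution on global sections explaining $e$. This is the desired bijection, and the second assertion about non-contextuality is then immediate from the definition.

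I do not foresee a real obstacle: every step is forced by the definitions. The only subtle point worth flagging is that the argument does \emph{not} require $D_R\circ \mathcal{E}_X$ to be a sheaf (it generally fails to be one, as noted in the paper); we are merely classifying natural transformations out of the constant terminal presheaf on $\mathcal{P}(X)\op$, and such transformations are trivially determined by their value at the top element $X$.
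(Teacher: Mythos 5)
Your proposal is correct and follows essentially the same route as the paper's own proof: identify the unique simplicial relation $X\to\emptyset$, observe that a natural transformation out of the constant singleton presheaf is determined by its component at $X$, i.e.\ by a distribution $d\in D_R(\mathcal{E}_X(X))$, and check that the pushforward condition says exactly $d|_C=e_C$ for all $C\in\mathcal{M}$. You simply spell out the naturality and pushforward computations in more detail than the paper does, which is fine.
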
 

\begin{proof} Given an $R$-empirical model $e\colon\langle X,\mathcal{M},(O_x)_{x\in X}\rangle$, the data for a simulation $1\to e$ consists of a simplicial relation $X\to\emptyset$, of which there is only one, and of a natural transformation from the constant singleton presheaf to $D_R\circ\mathcal{E}_X$. Hence, the data for a simulation $1\to e$ corresponds to a probability distribution in $D_R(\mathcal{E}_X(X))$. This data is in fact a simulation iff it restricts to $e_C$ for each $C\in\mathcal{M}$.
\end{proof}

Let $f\colon R\to S$ be a homomorphism of semifields. Then $f$ induces a natural transformation (in fact, a monad homomorphism) $\kappa(f)\colon D_R\to D_S$ defined by 
    \[ 
                    e\in D_R(X)\mapsto \kappa(f)_X(e)\in D_S(X) \qquad \qquad \kappa(f)_X(e)(s):=f(e(s))
    \]
This is known to map $R$-empirical models to $S$-empirical models by sending the $R$-empirical model $e\colon \langle X,\mathcal{M},(O_x)_{x\in X}\rangle$ to the $S$-empirical model $F_f(e)\colon \langle X,\mathcal{M},(O_x)_{x\in X}\rangle$ defined by $F_f(e)_C=\kappa(f)(e_C)$. 

\begin{theorem}
The map $e\mapsto F_f(e)$ extends to a functor $\cat{Emp_R}\to\cat{Emp_S}$. Hence $S$-contextuality of $F_f(e)$ implies $R$-contextuality of $e$. In particular, logical contextuality implies probabilistic contextuality.
\end{theorem}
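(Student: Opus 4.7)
The plan is to define the functor on morphisms by postcomposing the stochastic part with $\kappa(f)$, leaving the simplicial relation untouched, and then verify that everything is well-defined and functorial using the fact that $\kappa(f)$ is a monad homomorphism. Concretely, given a simulation $(\pi,\sigma)\colon d\to e$ in $\cat{Emp_R}$, I would send it to $(\pi, \kappa(f)_{\mathcal{E}_X(-)}\circ\sigma)\colon F_f(d)\to F_f(e)$. Naturality of the resulting transformation $\mathcal{E}_Y(\pi(-))\to D_S\circ\mathcal{E}_X(-)$ is immediate from naturality of $\sigma$ and $\kappa(f)$.

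The first thing to check is that this is actually a simulation, \ie that pushing $F_f(d)$ forward along $(\pi,\kappa(f)\circ\sigma)$ gives $F_f(e)$. For each $C\in\mathcal{M}$ I would unfold the pushforward formula and then appeal to two facts: first, that $\kappa(f)$ commutes with marginalisation (so that $F_f(d)|_{\pi(C)} = \kappa(f)(d|_{\pi(C)})$, which holds because $\kappa(f)$ is a natural transformation $D_R\to D_S$), and second, the monad-homomorphism identities $\kappa(f)\circ\mu_R = \mu_S\circ\kappa(f)\circ D_S(\kappa(f))$ together with the naturality square $\kappa(f)\circ D_R(h) = D_S(h)\circ\kappa(f)$. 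These combine to give $\mu_S D_S(\kappa(f)\circ\sigma_C)(\kappa(f)(d|_{\pi(C)})) = \kappa(f)(\mu_R D_R(\sigma_C)(d|_{\pi(C)})) = \kappa(f)(e_C) = F_f(e)_C$, which is exactly what is needed.

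Functoriality splits into two checks. Identities are preserved because the identity simulation on $e$ is $(\id,\eta^R_{\mathcal{E}_X(-)})$ and the monad homomorphism axiom $\kappa(f)\circ\eta^R=\eta^S$ turns this into the identity on $F_f(e)$. For composition, I would take $(\pi,\sigma)\colon a\to b$ and $(\rho,\tau)\colon b\to c$, write down both $F_f$ applied to their composite and the composite of their $F_f$-images, and reduce equality of the two stochastic parts to the same pair of monad-homomorphism identities used above. The main obstacle throughout is purely bookkeeping: making sure the naturality squares and monad-homomorphism squares are applied in the correct order, since $\kappa(f)$ interacts both with the functorial action of $D_R/D_S$ on morphisms and with the multiplications $\mu_R/\mu_S$ that implement Kleisli composition.

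Finally, for the corollary about contextuality, I would note that $F_f$ sends the terminal empirical model $1$ (on the empty scenario) to itself, since it acts trivially on the empty family of distributions. Hence any simulation $1\to e$ in $\cat{Emp_R}$ yields, by applying the functor, a simulation $1=F_f(1)\to F_f(e)$ in $\cat{Emp_S}$. By Theorem~\ref{thm:globalsectionsasmorphisms}, if $e$ is $R$-non-contextual then $F_f(e)$ is $S$-non-contextual; contrapositively, $S$-contextuality of $F_f(e)$ forces $R$-contextuality of $e$. Specialising to the support homomorphism $\mathbb{R}^+\to\mathbb{B}$ (which is the possibilistic collapse) then gives the stated implication from logical to probabilistic contextuality.
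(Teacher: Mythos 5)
Your proposal is correct and follows essentially the same route as the paper: the paper also defines $F_f(\pi,\sigma):=(\pi,\kappa(f)\circ\sigma)$, declares the functoriality checks straightforward, and deduces the contextuality statement from preservation of the terminal object together with Theorem~\ref{thm:globalsectionsasmorphisms}; you merely spell out the verifications the paper omits. (One cosmetic slip: the monad-homomorphism law should read $\kappa(f)\circ\mu_R=\mu_S\circ D_S(\kappa(f))\circ\kappa(f)_{D_R}$, i.e.\ the whiskering order in your displayed identity is reversed, but this does not affect the argument.)
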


\begin{proof} The action of $F_f$ on an $R$-stochastic simulation $(\pi,\sigma)$ is defined by 
    \begin{equation*}
        F_f(\pi,\sigma):=(\pi,\kappa(f)\circ\sigma)
    \end{equation*}
It is straightforward to check that this defines a functor $\cat{Emp_R}\to\cat{Emp_S}$. By construction it preserves the terminal object, so that if there is a map $1\to e$ in \cat{Emp_R}, then there must be one in \cat{Emp_S}.
\end{proof}

The possibilistic collapse of an $\mathbb{R}^+$-empirical model is defined to be the image of $e$ under the functor $F\colon \cat{Emp_{\mathbb{R}^+}}\to\cat{Emp_\mathbb{B}}$ induced by the unique semiring homomorphism $\mathbb{R}^+\to\mathbb{B}$. It is known not to be surjective on objects~\cite[Proposition 9.1.]{abramsky2013relational} and neither is it full -- for example, there are probabilistically contextual models that are not logically contextual, so that there is a map $1\to F(e)$ in \cat{Emp_\mathbb{B}} not arising as the image of a map $1\to e$ in \cat{Emp_{\mathbb{R}^+}}

While $R$-contextuality admits a categorical description, we have not been able to obtain a similar one for strong contextuality. However, morphisms do respect strong contextuality.

\begin{theorem} Let $(\pi,\sigma)\colon d\to e$ be an $R$-stochastic simulation where $R$ is either $\mathbb{R}^{+}$ or $\mathbb{B}$. If $e$ is strongly contextual, $d$ is as well.
\end{theorem}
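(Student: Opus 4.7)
My plan is to prove the contrapositive: if $d$ admits a consistent global section, then so does $e$. So suppose $t \in \mathcal{E}_Y(Y)$ satisfies $t|_D \in \supp(d_D)$ for every $D \in \mathcal{N}$. I form the distribution $\sigma_X(t|_{\pi(X)}) \in D_R(\mathcal{E}_X(X))$ and pick any $s$ in its support (which is nonempty, as it is a distribution on a finite set). The claim to verify is that $s|_C \in \supp(e_C)$ for every $C \in \mathcal{M}$.

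The verification uses two standard tools. First, naturality of $\sigma$ applied to the inclusion $C \hookrightarrow X$ gives a commuting square whose upper-right corner equates $\sigma_C(t|_{\pi(C)})$ with $(\sigma_X(t|_{\pi(X)}))|_C$, the marginal of our chosen distribution. In particular, since $s \in \supp(\sigma_X(t|_{\pi(X)}))$, its restriction $s|_C$ lies in $\supp(\sigma_C(t|_{\pi(C)}))$. Second, by the definition of pushforward, $e_C = \mu D_R(\sigma_C)(d|_{\pi(C)})$, whose support is $\bigcup_{t' \in \supp(d|_{\pi(C)})} \supp(\sigma_C(t'))$. Hence to conclude $s|_C \in \supp(e_C)$ it suffices to show $t|_{\pi(C)} \in \supp(d|_{\pi(C)})$.

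The main (and essentially only) obstacle is this last inclusion, and it is where the hypothesis on $R$ enters. Because $\pi$ is simplicial, I pick $D \in \mathcal{N}$ with $\pi(C) \subseteq D$; then $t|_D \in \supp(d_D)$ by assumption, and $t|_{\pi(C)} = (t|_D)|_{\pi(C)}$. So I need: whenever $u \in \supp(d_D)$, the restriction $u|_{\pi(C)}$ lies in $\supp(d_D|_{\pi(C)})$. Unwinding, $(d_D|_{\pi(C)})(u|_{\pi(C)}) = \sum_{u': u'|_{\pi(C)} = u|_{\pi(C)}} d_D(u')$, which is a sum of elements of $R$ containing at least one nonzero summand $d_D(u)$. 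Over $\mathbb{R}^+$ (nonnegative reals) or $\mathbb{B}$ (Booleans), no cancellation is possible, so the sum is nonzero. Over $\mathbb{R}$ signed measures could cancel — this is exactly why the statement excludes $R = \mathbb{R}$.

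Putting the pieces together yields the contrapositive: from a consistent global section $t$ for $d$ I extract the section $s$ for $e$ via $\sigma_X$. This completes the argument, with the only nontrivial input being the no-cancellation property of $\mathbb{R}^+$ and $\mathbb{B}$; everything else is naturality of $\sigma$ and the definition of the pushforward.
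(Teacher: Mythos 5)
Your proposal is correct and follows essentially the same route as the paper: argue the contrapositive, push a consistent global section of $d$ through $\sigma_X$, and take any global section in the support of the resulting distribution. You merely spell out the details the paper leaves implicit (naturality for the marginals, the support of the pushforward, and the no-cancellation property of $\mathbb{R}^+$ and $\mathbb{B}$, which is indeed exactly where the hypothesis on $R$ is used).
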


\begin{proof} We prove the contrapositive, so assume that $d\colon \langle Y,\mathcal{N},(P_y)_{y\in Y}\rangle$ is not strongly contextual. Then there exists a global section $s\in \mathcal{E}_Y(Y)$ such that $s|_D\in\supp(d_D)$ for all $D\in\mathcal{N}$. Then $e=\sigma_* d$ implies that any global section $p\in \mathcal{E}_X(X)$ in the support of $\sigma_X(s)$ satisfies $p|_C\in\supp e_C$ for every $C\in\mathcal{M}$, so that $e$ is not strongly contextual.
\end{proof}

\begin{lemma}\label{lem:pushfowardofmixtures} Pushforward commutes with convex mixtures: if  $d_1,\dots d_n:\langle X,\mathcal{M},(O_x)_{x\in X}\rangle$ are empirical models, $r_1\dots r_n\in R$ sum to $1$ and $(\pi,\sigma)\colon\langle X,\mathcal{M},(O_x)_{x\in X}\rangle\to \langle Y,\mathcal{N},(P_y)_{y\in Y}\rangle$ is an $R$-stochastic simulation, then 
    \begin{equation*}
        \sigma_* (\sum r_i d_i)=\sum r_i \sigma_* (d_i)
    \end{equation*}
\end{lemma}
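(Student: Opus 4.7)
The equation is a pointwise equation of empirical models on $\langle Y,\mathcal{N},(P_y)_{y\in Y}\rangle$, so I would verify it context-by-context: fix $C\in\mathcal{N}$ and show $\sigma_*(\sum r_i d_i)_C = \sum r_i\,\sigma_*(d_i)_C$. Unfolding the definition of pushforward, the left-hand side is $\mu\circ D_R(\sigma_C)\bigl((\textstyle\sum r_i d_i)|_{\pi(C)}\bigr)$ and the right-hand side is $\sum r_i\,\mu\circ D_R(\sigma_C)(d_i|_{\pi(C)})$, so it suffices to check that each of the three maps occurring in the formula for $\sigma_{*}(-)_C$ respects $R$-affine (convex) combinations: restriction along an inclusion, application of $D_R(\sigma_C)$, and the monad multiplication $\mu$.

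For restriction, I would invoke the observation already used in the proof of Proposition \ref{convexcombos of maps}, namely $(\sum r_i d_i)|_U = \sum r_i (d_i|_U)$, applied at $U = \pi(C)$. For $D_R(\sigma_C)$, the formula $D_R(f)(e)(y) = \sum_{x\in f^{-1}y} e(x)$ is $R$-linear in $e$, hence $D_R(\sigma_C)(\sum r_i e_i) = \sum r_i D_R(\sigma_C)(e_i)$ immediately. For $\mu$, I would compute directly: $\mu(\sum r_i E_i)(x) = \sum_e (\sum_i r_i E_i)(e)\,e(x) = \sum_i r_i \sum_e E_i(e) e(x) = \sum_i r_i \mu(E_i)(x)$, where $E_i\in D_R D_R(-)$. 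Chaining these three facts yields the contextwise equation and hence the lemma.

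Conceptually, the content of the lemma is just that the Kleisli extension $f^{\#} = \mu\circ D_R(f)$ of any map $f\colon A\to D_R B$ is the unique $R$-affine map $D_R A\to D_R B$ extending $f$; this is a standard property of the distribution monad on $\cat{Set}$, and the lemma is the specialization of that property to $f=\sigma_C$ composed with the restriction $D_R\mathcal{E}_X(C')\to D_R\mathcal{E}_X(\pi(C))$ for any $C'\supseteq \pi(C)$ in $\mathcal{M}$, itself of the form $D_R(-)$. There is no real obstacle; the only care needed is to keep track of which convex combinations live in which $D_R$-object at each stage, and to ensure the restrictions are well-defined—but this is guaranteed because the $d_i$ are no-signalling empirical models on the same scenario, so all the marginals $(\sum r_i d_i)|_{\pi(C)}$ make sense independently of a choice of maximal context containing $\pi(C)$.
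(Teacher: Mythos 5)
Your proof is correct, but it takes the route the paper explicitly declines: the paper notes the lemma can be shown ``via a straightforward calculation, or via abstract nonsense'' and opts for the latter, simply observing that $\mu\circ D_R(\sigma_C)$ is the Kleisli extension of $\sigma_C$, i.e.\ a homomorphism of free $D_R$-algebras, and hence preserves $R$-convex combinations. You instead carry out the calculation, checking contextwise that each of the three constituents of $\sigma_*(-)_C$ --- the marginalization to $\pi(C)$, the functorial action $D_R(\sigma_C)$, and the multiplication $\mu$ --- is $R$-affine; your closing remark about Kleisli extensions being the unique affine extensions is exactly the paper's one-line argument in disguise. The trade-off: the paper's version is shorter and makes the conceptual reason transparent, while yours is self-contained at the level of the explicit formulas for $D_R$ and $\mu$ and, usefully, makes explicit the step the paper leaves tacit, namely that restriction maps preserve convex combinations (the observation already invoked in the proof of Proposition~\ref{convexcombos of maps}, and itself an instance of linearity of $D_R$ applied to a projection). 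Your computation of $\mu(\sum_i r_i E_i)$ and of $D_R(f)(\sum_i r_i e_i)$ is accurate, and your point that no-signalling guarantees the marginals $(\sum_i r_i d_i)|_{\pi(C)}$ are independent of the choice of maximal context is the right well-definedness check, so there is no gap.
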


\begin{proof}
This can be established via a straightforward calculation, or via abstract nonsense. We opt for the latter: objects in the Kleisli category are the free $D_R$-algebras and hence $\mu D_R(\sigma)$ is a homomorphism of $D_R$-algebras, which means that the equation to be proved holds. 
\end{proof}

\begin{theorem}\label{thm:ncf}
Non-contextual fraction defines a functor $\cat{Emp_{\mathbb{R}^+}}\to [0,1]$, where $[0,1]$ is viewed as a category via its order relation.
\end{theorem}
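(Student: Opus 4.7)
The plan is to observe that $[0,1]$ viewed as a thin category has exactly one arrow $a\to b$ when $a\le b$, so the only functoriality obligation is monotonicity: any simulation $(\pi,\sigma)\colon d\to e$ in $\cat{Emp_{\mathbb{R}^+}}$ should yield $NCF(d)\le NCF(e)$. Once monotonicity is established, preservation of identities and composites is automatic, since $[0,1]$ is a preorder.

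To establish monotonicity, fix a simulation $(\pi,\sigma)\colon d\to e$ and set $\lambda:=NCF(d)$. Using the fact that the set of $\mathbb{R}^+$-empirical models on a fixed finite scenario is a compact convex subset of a finite-dimensional space, the supremum in the definition of $NCF$ is attained, so we may pick a decomposition $d=\lambda d^{NC}+(1-\lambda)d'$ with $d^{NC}$ non-contextual and $d'$ some empirical model. Lemma~\ref{lem:pushfowardofmixtures} then gives
\[ e=\sigma_* d=\lambda\,\sigma_* d^{NC}+(1-\lambda)\,\sigma_* d'. \]

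The crux is that $\sigma_* d^{NC}$ is non-contextual. By Theorem~\ref{thm:globalsectionsasmorphisms}, non-contextuality of $d^{NC}$ corresponds to a simulation $1\to d^{NC}$. Since $(\pi,\sigma)$ is a morphism of measurement scenarios, it also constitutes a simulation $d^{NC}\to\sigma_* d^{NC}$ (its pushforward of $d^{NC}$ is, by definition, $\sigma_* d^{NC}$). Composing the two yields a simulation $1\to\sigma_* d^{NC}$, so by Theorem~\ref{thm:globalsectionsasmorphisms} again $\sigma_* d^{NC}$ is non-contextual. The displayed equation then presents $e$ as a convex combination of the non-contextual empirical model $\sigma_* d^{NC}$ with weight $\lambda$ and of another empirical model $\sigma_* d'$, so $NCF(e)\ge\lambda=NCF(d)$, which is exactly the monotonicity we wanted.

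The main place one needs to be a little careful is to confirm that the maximum defining $NCF$ is genuinely attained, so that the non-contextual component $d^{NC}$ exists as a bona fide empirical model rather than merely as a limit of such; beyond that, the argument reduces to the linearity of pushforward provided by Lemma~\ref{lem:pushfowardofmixtures} together with the terminal-object characterisation of non-contextuality, both of which do the heavy lifting.
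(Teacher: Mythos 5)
Your proof is correct and follows essentially the same route as the paper: push a convex decomposition of $d$ forward using Lemma~\ref{lem:pushfowardofmixtures} and observe that the non-contextual part remains non-contextual, giving $NCF(d)\le NCF(e)$, which is all that functoriality into the preorder $[0,1]$ requires. The only differences are cosmetic: you justify non-contextuality of $\sigma_* d^{NC}$ by composing a simulation $1\to d^{NC}$ with $d^{NC}\to\sigma_* d^{NC}$ (the paper simply asserts this step), and your compactness/attainment concern is avoidable, since pushing forward \emph{any} decomposition $d=\lambda d^{NC}+(1-\lambda)d'$ already shows $NCF(e)\ge\lambda$, and one may then pass to the supremum over such $\lambda$.
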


\begin{proof}
 Let $(\pi,\sigma)\colon d\to e$ be a simulation, and consider any decomposition of $d$ as $\lambda d^{NC}+(1-\lambda)d'$, with $d^{NC}$ non-contextual. Then 
 \begin{equation*}
        e=\sigma_*  d=\sigma_*  (\lambda d^{NC}+(1-\lambda)d')=\lambda\sigma_*  d^{NC}+(1-\lambda)\sigma_* d'
\end{equation*}
where the last equation uses Lemma~\ref{lem:pushfowardofmixtures}. Now $\sigma_*  d^{NC}$ is non-contextual and hence the convex decomposition of $e$ obtained implies that $NCF(d)\leq NCF(e)$.
\end{proof}

\begin{definition}
Let $\langle X,\mathcal{M}\rangle$ be a simplicial complex. Given a vertex $x\in X$ that belongs to exactly one maximal face, we say that there is a \emph{Graham-reduction} from $\langle X,\mathcal{M}\rangle$ to the subcomplex $\langle X\setminus\{x\},\{C\setminus\{x\}\mid C\in \mathcal{M}\}\rangle$. If there is a Graham-collapse from $\langle X,\mathcal{M}\rangle$ to $\langle Y,\mathcal{N}\rangle$, we write $\langle X,\mathcal{M}\rangle\rightsquigarrow \langle Y,\mathcal{N}\rangle$. We call a simplicial complex $\langle X,\mathcal{M}\rangle$ \emph{acyclic} if there is a sequence of Graham collapses from $\langle X,\mathcal{M}\rangle$ to the empty simplicial complex.
\end{definition}

Vorob'evs theorem \cite{vorob1962consistent} is already known (see \cite[p.13]{cohomology_of_contextuality}) to imply that for questions about contextuality one can consider the reduced model. This can be recast and proved in terms of simulations. 

\begin{theorem}\label{thm:Vorobev} Graham-reductions induce morphisms, \ie if $e\colon\langle X,\mathcal{M},(O_x)_{x\in X}\rangle$ is an empirical model and $\langle X,\mathcal{M}\rangle\rightsquigarrow \langle X\setminus\{x\},\{C\setminus\{x\}\mid C\in \mathcal{M}\}\rangle$ is a Graham-reduction, then there is a simulation $e|_{X\setminus\{x\}}\to e$. Hence any empirical model with an acyclic underlying simplicial complex is non-contextual.
\end{theorem}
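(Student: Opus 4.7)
The plan is to produce an explicit simulation $(\pi,\sigma)\colon e|_Y\to e$ for a single Graham reduction with $Y:=X\setminus\{x\}$, and then iterate. Let $C_0$ be the unique maximal context containing $x$. I define $\pi\colon X\to Y$ by $\pi(x')=\{x'\}$ for $x'\neq x$ and $\pi(x)=C_0\setminus\{x\}$. Then $\pi(C)=C$ is a face of the reduced complex for $C\neq C_0$ (since $x\notin C$) and $\pi(C_0)=C_0\setminus\{x\}$ is a maximal face of it, so $\pi$ is simplicial. For the stochastic component, I use the conditional distribution of $e_{C_0}$ to resample the outcome at $x$: viewing $e_{C_0}$ as a distribution on $\mathcal{E}_X(C_0\setminus\{x\})\times O_x$, for each $U\subseteq X$ and $t\in\mathcal{E}_Y(\pi(U))$ with $t_2:=t|_{C_0\setminus\{x\}}$, I set $\sigma_U(t):=\eta(t)$ when $x\notin U$, and
\[
\sigma_U(t):=\sum_{o\in O_x} e_{C_0}(o\mid t_2)\cdot (t|_{U\setminus\{x\}},o)
\]
when $x\in U$, where $(s,o)$ denotes the section extending $s$ by $o$ at $x$. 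Where $t_2\notin\supp(e_{C_0}|_{C_0\setminus\{x\}})$ the conditional is undefined, so I fall back on the default $\sigma_U(t):=\eta((t|_{U\setminus\{x\}},o_0))$ for some fixed $o_0\in O_x$.

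Next I verify (i) naturality of $\sigma$ and (ii) the pushforward equation $\sigma_*(e|_Y)=e$. Naturality under $U\subseteq V$ is a short case split whose crucial observation is that $t_2=t|_{C_0\setminus\{x\}}$ is unchanged by restriction to $U$, so the conditional $e_{C_0}(-\mid t_2)$ used on the two sides matches. The mixed case $x\in V$, $x\notin U$ additionally uses $\sum_o e_{C_0}(o\mid t_2)=1$ to collapse to $\eta(t|_U)$. The pushforward on contexts $C\neq C_0$ is immediate since $\sigma_C=\eta$ and $\pi(C)=C$. For $C=C_0$, compatibility of $e$ gives $e|_Y|_{C_0\setminus\{x\}}=e_{C_0}|_{C_0\setminus\{x\}}$, and the defining identity $e_{C_0}(t_2,o)=e_{C_0}(o\mid t_2)\cdot e_{C_0}(t_2)$ yields
\[
\mu D_R(\sigma_{C_0})(e|_Y|_{C_0\setminus\{x\}})=\sum_{t_2,o}e_{C_0}(t_2)\,e_{C_0}(o\mid t_2)\,(t_2,o)=\sum_{t_2,o}e_{C_0}(t_2,o)\,(t_2,o)=e_{C_0};
\]
off-support values contribute nothing because $e_{C_0}(t_2)=0$ there.

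For the final statement, an acyclic sequence $\langle X,\mathcal{M}\rangle=\langle X_n,\mathcal{M}_n\rangle\rightsquigarrow\cdots\rightsquigarrow\langle X_0,\mathcal{M}_0\rangle=\langle\emptyset,\emptyset\rangle$ produces a composable chain $e|_{X_0}\to e|_{X_1}\to\cdots\to e|_{X_n}=e$ built by applying the construction at each step to the intermediate model $e|_{X_i}$. Composing in \cat{Emp_R} and noting that $e|_\emptyset$ is the terminal object gives a simulation $1\to e$, so Theorem~\ref{thm:globalsectionsasmorphisms} delivers non-contextuality. The main obstacle is the off-support case: conditionals are only defined on the support of the marginal, which is exactly why the theorem is stated over a semifield. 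Making the off-support extension of $\sigma$ coherent with naturality could have been delicate, but the constant choice $\eta((t|_{U\setminus\{x\}},o_0))$ is manifestly natural in $U$, so it suffices.
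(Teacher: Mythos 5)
Your construction is correct and essentially the paper's own proof: the same simplicial relation $\pi$ (sending $x$ to $C_0\setminus\{x\}$ and every other vertex to itself), the same stochastic component given by the conditional distribution $e_{C_0}(-\mid t_2)$ with a fixed default outcome off the support, the same verification via $e_{C_0}(t_2,o)=e_{C_0}(o\mid t_2)e_{C_0}(t_2)$, and the same iteration plus Theorem~\ref{thm:globalsectionsasmorphisms} for the acyclic case. The only cosmetic difference is that you check naturality of $\sigma$ by a direct case split, whereas the paper assembles it from the piece on $\{x\}$ and the Kleisli identity on $X\setminus\{x\}$ using the gluing Lemma~\ref{lem:gluingmaps}.
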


\begin{proof}
We define a simulation $e |_{X\setminus\{x\}}\to e$ in the appendix using Lemma~\ref{lem:gluingmaps}.    
\end{proof}

Next we define a monoidal structure on \cat{Emp_R}. On objects, the monoidal product (defined in~\cite{contextual_fraction}) sends the pair  
\[(e\colon \langle X,\mathcal{M},(O_x)_{x\in X}\rangle, d\colon \langle Y,\mathcal{N},(P_y)_{y\in Y}\rangle)\] to the empirical model 
\[e\otimes d\colon \langle X\sqcup Y,\mathcal{M}*\mathcal{N}:=\{C\sqcup D\mid C\in\mathcal{M},D\in\mathcal{N}\},(O_x)_{x\in X}\sqcup (P_y)_{y\in Y}\rangle.\]
One should think of this operation as doing experimental set-ups and simulations independently in parallel. Formally, the action on is defined by 
 \[(e\otimes d)_{C\sqcup D}=e_C\otimes d_D \qquad (\pi,\sigma)\otimes (\rho,\tau)=(\pi\sqcup \rho,\sigma\otimes\tau),\]
where $\sigma\otimes\tau$ is the natural transformation built using Lemma~\ref{lem:gluingmaps}. It is straightforward but tedious to check that this results in a symmetric monoidal structure on \cat{Emp_{R}} with the tensor unit given by the terminal object $1\colon\langle \emptyset,\emptyset,\emptyset\rangle$. One might view this SMCs as a resource theory (in the sense of~\cite{coecke2016mathematical}) for contextuality. From this point of view Theorem~\ref{thm:ncf} amounts to saying that the non-contextual fraction is a monotone for our resource theory. Next we will prove that this resource-theory is no-cloning, \ie that no contextual empirical model can be cloned using only classical correlations. 

 \begin{theorem}[No-cloning] There is a simulation $e\to e\otimes e$ iff there is a simulation $1\to e$, \ie if $e$ is non-contextual.
 \end{theorem}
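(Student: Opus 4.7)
The easy direction is a direct consequence of the symmetric monoidal structure on $\cat{Emp_R}$: given $\iota\colon 1\to e$, the composite
\[
 e\;\cong\; 1\otimes e\;\xrightarrow{\;\iota\otimes\id_e\;}\;e\otimes e
\]
is a simulation $e\to e\otimes e$ by functoriality of $\otimes$ together with the fact that $1$ is the monoidal unit. So it suffices to treat the forward direction, and the plan is to produce a morphism $1\to e$, which by Theorem~\ref{thm:globalsectionsasmorphisms} is the same data as a distribution on global sections explaining $e$.

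Suppose $\phi\colon e\to e\otimes e$ is a simulation. First I would iterate: inductively define $\phi^{(1)}=\phi$ and $\phi^{(n+1)}=(\phi\otimes \id_{e^{\otimes (n-1)}})\circ\phi^{(n)}$, obtaining simulations $\phi^{(n)}\colon e\to e^{\otimes n}$ for every $n\geq 1$. Unpacking what $\phi^{(n)}$ says: for any choice of contexts $C_1,\dots,C_n\in\mathcal{M}$, the underlying simplicial relation $\pi^{(n)}\colon nX\to X$ maps $C_1\sqcup\cdots\sqcup C_n$ into a single context $E(C_1,\dots,C_n)\in\mathcal{M}$, and the natural transformation pushes $e|_{\pi^{(n)}(C_1\sqcup\cdots\sqcup C_n)}$ forward to the independent product $e_{C_1}\otimes\cdots\otimes e_{C_n}$.

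Next I would take $n=|X|$, label the $n$ copies by the vertices of $X$, pick for each $x\in X$ a maximal context $C(x)\ni x$, and form the pushforward of $e$ along $\phi^{(n)}$ at the super-context $\bigsqcup_{x\in X}C(x)$. This yields a concrete distribution $\bigotimes_{x\in X} e_{C(x)}$ on $\prod_{x\in X}\mathcal{E}_X(C(x))$, arising from a single distribution on a single context of $e$. Post-composing with the ``diagonal read-off'' deterministic morphism that extracts the $x$-th coordinate of the $x$-th copy defines a candidate distribution $d\in D_R(\mathcal{E}_X(X))$ on global sections. I would then verify that the simulation $1\to e$ it induces pushes forward to $e$, which reduces, by Theorem~\ref{thm:globalsectionsasmorphisms}, to checking $d|_C=e_C$ for every $C\in\mathcal{M}$.

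The principal obstacle is precisely this marginal check: the naive diagonal read-off uses independent copies, so for $x\in C$ with $C(x)\neq C$ the reconstructed joint marginal on $C$ is a product rather than $e_C$. Overcoming this will require leveraging the naturality of $\sigma$ over arbitrary subsets (not merely maximal contexts), which ties together the different $\sigma_{C\sqcup D}$'s via a common restriction to $\sigma_{C\cap D}$; one then expects, using Lemma~\ref{lem:gluingmaps} and Proposition~\ref{convexcombos of maps}, to modify the construction so that the contributions across copies cohere and $d$ acquires the correct marginals, delivering the desired simulation $1\to e$.
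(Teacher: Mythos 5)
Your easy direction is fine (the paper routes through $e\to 1\cong 1\otimes 1\to e\otimes e$, but your $1\otimes e\to e\otimes e$ is equivalent). The forward direction, however, contains a genuine gap, and it is exactly the one you flag yourself. By definition of the monoidal product and of a simulation, the pushforward of $e$ along $\phi^{(n)}$ at the super-context $\bigsqcup_{x\in X}C(x)$ is \emph{forced} to be the independent product $\bigotimes_{x\in X}e_{C(x)}$. Consequently any coordinate read-off from it, diagonal or otherwise, yields on a context $C$ a product of marginals such as $\bigotimes_{x\in C}e|_{\{x\}}$ rather than $e_C$: the cross-measurement correlations of $e_C$ are simply absent from the data you are post-processing, so no choice of read-off, and no appeal to Lemma~\ref{lem:gluingmaps} or Proposition~\ref{convexcombos of maps}, can restore them. ``One then expects \dots\ to modify the construction so that the contributions across copies cohere'' is not an argument, and I do not see how to make it one along these lines; even reading off the whole block $C(x)$ from copy $x$ fails, because blocks from different copies are independent and need not agree on overlaps, so they do not glue to a distribution on global sections.

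The idea that actually closes the argument is different in kind: the stochastic part $\sigma$ of the cloning map is almost irrelevant, and the work is done by the underlying simplicial relation together with an induction on $|X|$. From $(\pi,\sigma)\colon e\to e^{\otimes n}$ with $n=|X|$ one has $\pi\colon X^{\sqcup n}\to X$; write $\pi_i$ for its restriction to the $i$th copy, so $(\pi_i,\sigma^i)\colon e\to e$ is the composite $e\to e^{\otimes n}\to e^{\otimes n}|_{X_i}\cong e$. If $\pi_i(X)=X$ for every $i$, enumerate $X=\{x_1,\dots,x_n\}$ and choose $y_i$ with $x_i\in\pi_i(y_i)$; since the $y_i$ lie in distinct copies, $\{y_1\}\sqcup\cdots\sqcup\{y_n\}$ lies inside a context of $\mathcal{M}^{*n}$, so simpliciality of $\pi$ forces $X\subseteq\bigcup_i\pi_i(y_i)$ to be jointly measurable, and then $e$ is trivially non-contextual ($e_X$ is the required global distribution). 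Otherwise $\pi_i(X)\subsetneq X$ for some $i$; by Lemma~\ref{lem:image} that component restricts to a morphism $e|_{\pi_i(X)}\to e$, and composing $e|_{\pi_i(X)}\to e\to e\otimes e$ with the restriction $e\to e|_{\pi_i(X)}$ applied in parallel yields a cloning map $e|_{\pi_i(X)}\to e|_{\pi_i(X)}\otimes e|_{\pi_i(X)}$ on a strictly smaller measurement set; induction gives $1\to e|_{\pi_i(X)}$, and composing with $e|_{\pi_i(X)}\to e$ finishes. Your iterate $\phi^{(n)}$ is the right object to form, but the missing key is this case analysis on $\pi$, not a reconstruction of a global distribution from the pushforward.
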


\begin{proof}
    The parallel composition of a morphism $1\to e$ with itself gives rise to the composite \[e\to 1\cong 1\otimes 1\to e\otimes e,\] proving the easier direction. 

    For the converse, consider an empirical model $e\colon \langle X,\mathcal{M},(O_x)_{x\in X}\rangle$ that can be cloned freely, \ie admits a simulation $e\to e\otimes e$. We use induction on $|X|$ to build a simulation $1\to e$, the base case $X=\emptyset$ being clear. Assuming the claim holds for all $m<|X|=:n$, we proceed. Note that from $e\to e\otimes e$ and monoidal operations we can build a simulation $(\pi,\sigma)\colon e\to e^{\otimes n}$. Now $\pi\colon \langle X^{\sqcup n},\mathcal{M}^{*n}\rangle \to \langle X,\mathcal{M}\rangle$. Define $(\pi_i,\sigma^i)\colon e\to e$ as the composite $e\to e^{\otimes n}\to e^{\otimes n}|_{X_i}\cong e$, where the morphism in the middle is the restriction to $i$th copy of $e$ in $e^{\otimes n}$. Now $\pi\colon X^{\sqcup n}\to X$ being simplicial means that $\bigcup_{i=1}^n \pi(C_i)$ is a jointly measurable subset whenever each $C_i$ is. We split into two cases:

    If $\pi_i{X}=X$ for each $i$, then, enumerating $X$ as $x_1,\dots x_n$, we can choose $y_i$ such that $x_i\in \pi_i(y_i)$ for each $i$. Hence $X=\cup \{x_i\}$ is jointly measurable, so that $e$ is non-contextual.

    Otherwise $\pi_i{X}\subsetneq X$ for some $i$. Then $(\pi_i,\sigma^i)$ restricts to a morphism $e|_{\pi_i(X)}\to e$ using Lemma~\ref{lem:image}. Now, doing the restriction $e\to e|_{\pi_i(X)}$ in parallel results in the composite \[e|_{\pi_i(X)}\to e\to e\otimes e\to e|_{\pi_i(X)}\otimes e|_{\pi_i(X)}.\] Hence by the induction assumption $e|_{\pi_i(X)}$ is non-contextual, so take the composite $\emptyset\to e|_{\pi_i(X)}\to e$.
\end{proof}

\section{Further questions}\label{sec:further}

\paragraph{Structure of \cat{Emp_R}} The main thread going through this work is that the existence of a simulation $d\to e$ implies that $e$ is at most as contextual as $d$ is. Assuming that our definition fully captures the intuitive idea of ``simulation with classical randomness'' (see more below), one could argue that the study of the contextuality hierarchy amounts to studying the category \cat{Emp_R}. This motivates cataloguing the categorical properties of \cat{Emp_R}. For instance, does the factorization from Lemma~\ref{lem:image} give rise to a (co)pure-(co)mixed factorization system in the sense of~\cite{cunningham2017purity}?

Likewise, Theorem~\ref{thm:globalsectionsasmorphisms} guarantees that non-contextual models can be defined in terms of the category \cat{Emp_R}. Can other interesting classes of empirical models be recognised in terms of the category \cat{Emp_R}? This question applies most pressingly to strongly contextual models, but can also be asked about models admitting an AvN-argument~\cite{contextuality_cohomology_and_paradox} or a cohomological obstruction~\cite{cohomology_of_contextuality}, or perhaps more boldly, about models admitting a quantum realisation. A weaker variant is to ask if we can transport AvN-arguments or cohomological obstructions along morphisms. Similarly, in \cite{contextual_fraction}, several formulas concerning the contextual fraction are proved, and it would be interesting to see if one can explain those formulas as categorical properties of the functor $CF$.

\paragraph{Relationship with other approaches to contextuality.} There are several alternative approaches contextuality, for example, based on operational equivalence~\cite{spekkens2005contextuality}, hypergraphs~\cite{acin2015combinatorial} or effect-algebras~\cite{sander2015effect}. It would be interesting to relate these formalisms to each other via categorical isomorphisms, equivalences or at least adjunctions. Some work in this direction already exists: for example, \cite{wester2017almost} defines a category of empirical models and a category suitable for the equivalence-based approach, and proves an isomorphism between the two. However, the morphisms on both sides are deterministic\footnote{Stochastic morphisms with preprocessing but without dependence on multiple measurements in the equivalence-based setting are investigated in \cite{duarte2017resource}. Since we do not have preprocessing but have dependencies on several measurements, hoping for an isomorphism might be too much.}. Similarly, in~\cite[Appendix D]{acin2015combinatorial} a bijection between empirical models in our sense and probabilistic models on hypergraphs is defined. However, the question whether this extends to morphisms is not investigated. Moreover, the most obvious definition for morphisms in the combinatorial approach comes from morphisms of hypergraphs, and as such is deterministic. The same drawback applies to the effect-algebraic approach \cite{sander2015effect}: empirical models are regarded as maps from an effect algebra $E$ to the unit interval. In other words, the most natural category to work in is the slice category $\cat{EA/[0,1]}$, but then morphisms of empirical models seem to boil down to deterministically mapping effects to effects (\ie outcomes to outcomes) in a way that preserves the probabilities. Of course, this is not meant as a dismissal of the other approaches: rather, it is a call to define morphisms in the appropriate generality for each of the approaches and to see how the resulting categories relate to each other.

\paragraph{Analogy to complexity theory} The existence of a simulation $d\to e$ seems to be conceptually analogous to the existence of a reduction from one problem another. How far can one push this analogy to complexity theory into ``contextuality theory''? For example, are there complete problems for a ``contextuality-class'', at least if one fixes the measurement scenario?

\paragraph{Stronger notions of morphisms} One might also wonder if stronger notions of a morphism are warranted. For example, if the students are allowed to share quantum resources, one should get a notion of a ``quantum simulation''. The $R$-stochastic simulations use the Kleisli category of the distribution monad. Maybe the notion of a quantum reduction uses something like the quantum monad from \cite{abramsky2017quantummonad}.

Perhaps the most pressing issue is preprocessing: intuitively, when using $d\colon \langle Y,\mathcal{N},(P_y)_{y\in Y}\rangle$ to simulate $e\colon \langle X,\mathcal{M},(O_x)_{x\in X}\rangle$, one should, for a given $x\in X$, be able to use convex mixtures of measurements in $Y$. This might be needed to fully capture the meaning of ``$d$ can be used to simulate $e$ given classical shared correlations''. Indeed, preprocessing is a fundamental part of the maps in \cite{amaral2017noncontextual}, and similarly they are routinely used in \cite{barrett2005nonlocal} and \cite{barrett2005popescu}. In \cite{barrett2005popescu} one considers several copies of PR-boxes and the correlations one can build from them, allowing for simulations where the output of one PR box is fed into another. These two features -- measurement choices depending on classical randomness or on results of other measurements -- result in wider notion of simulation, and we leave formalizing it in the sheaf-theoretic framework as an open question.

One possibility might be to think of a ``simulation $d\to e$ with preprocessing'' as a ordinary simulation $d'\to e$, where $d'$ is built from $d$ and the randomization needed in the preprocessing -- in a sense, $d'$ is the sequential composition of $d$ with a non-contextual model. For this approach to work, one probably needs a good theory of sequential composition of empirical models. The alternative approach is to bite the bullet and let $\pi\colon X\to Y$ be stochastic instead of deterministic. The worry of course being that either the theory does not work or becomes too unwieldy to be useful. In particular, combining both a relation $\pi X\to Y$ and preprocessing might lead to trouble: for instance, what does it mean to mix a measurement $x_1$ with a pair of measurements $x_2$ and $x_3$? Incorporating both of these aspects might be difficult, perhaps for the same reasons combining nondeterministic and probabilistic computing is difficult~\cite{varacca_winskel_2006}.

\section*{Acknowledgements}

I would like to thank Chris Heunen for helpful comments and Rui Soares Barbosa for fixing an issue with the monoidal product. This work was supported by the Osk. Huttunen Foundation.

\bibliographystyle{eptcs}
\bibliography{contextuality}

\appendix 
\section*{\appendixname}
\begin{proof}[Proof of Theorem~\ref{thm:Vorobev}]
    We define a simulation $e |_{X\setminus\{x\}}\to e$. First of all, let $S\in\mathcal{M}$ be the only maximal face containing $x$. We define a simplicial relation $\pi\colon \langle X,\mathcal{M}\rangle\to \langle X\setminus\{x\},\{C\setminus\{x\} | C\in \mathcal{M}\}\rangle$ by
    \begin{equation*} \pi (y)=\begin{cases} C\setminus\{x\} & \text{if }y=x \\
                            \{y\} &\text{otherwise}
                                \end{cases}
    \end{equation*}
Next we define a natural transformation in parts using Lemma~\ref{lem:gluingmaps}. For that purpose, we partition $X$ into $U_1:=\{x\}$ and $U_2:=X\setminus\{x\}$. On $U_2$ the relation $\pi$ restricts to the identity, so we can define $\sigma^2=\eta$, \ie the identity for the Kleisli composition. For the other half, we just need to define a  map $\sigma_x\colon O^{\pi(x)}=O^{C\setminus\{x\}}\to D_R(O^{x})$. Fix $o\in O$. Given $s\colon C\setminus{X}\to O$, we set
    \begin{equation*} \sigma_{x}(s)=\begin{cases} e_C (- | s) & \text{if }s\in\supp_{C\setminus\{x\}}(e_C)\\ 
                                          1\cdot o  & \text{otherwise.}
                                \end{cases}
    \end{equation*}
To check that the resulting morphism $(\pi,\sigma\otimes\eta )$ of measurement scenarios defines a simulation, $e|_{X\setminus\{x\}}\to e$, consider first a context $D\in\mathcal{M}$ that does not contain $x$. Then $D\subset U_2$ and $\sigma\otimes\eta $ restricts to $\eta$, which is the identity for the Kleisli composition.

Consider now the context $C$ containing the deleted point $x$. Now $\pi(C)=C\setminus \{x\}$, and write $e |_{C\setminus\{x\}}=:d$ in the form $\sum_{s\in\supp d} d(s)s$. Then 
    \begin{align*}\mu D_R(\sigma\otimes\eta)_C d&=\mu D_R(\sigma\otimes\eta)_C \sum_{s\in\supp d} d(s)s \\
    &=\mu\sum_{s\in\supp d} d(s)(\sigma\otimes \eta)_C(s) \\
    &=\mu\sum_{s\in\supp d} d(s)(\sigma_{x}(s)\otimes \eta_{C\setminus\{x\}}(s)) \\
    &=\mu \sum_{s\in\supp d} d(s) (e|_C(-|s)\otimes 1s)
    \end{align*}
This being equal to $e_C$ boils down to the fact that $p(x,y)=p(x | y)p(y)$ for conditional $R$-distributions. 
\end{proof}
\end{document}